\numberwithin{equation}{section}
\theoremstyle{plain}
\newtheorem{theorem}{Theorem}
\newtheorem{corollary}[theorem]{Corollary}
\newtheorem{example}[theorem]{Example}
\newtheorem{lemma}[theorem]{Lemma}
\newtheorem{proposition}[theorem]{Proposition}
\newtheorem{remark}[theorem]{Remark}
\DeclareMathOperator*{\argmin}{arg\,min}
\begin{document}

\begin{center}
  \Large \bf Optimal Investment in a Dual Risk Model\end{center}

\author{}
\begin{center}
{Arash Fahim}\,\footnote{Department of Mathematics, Florida State University, 1017 Academic Way, Tallahassee, FL-32306, 
United States of America; Email: arash@math.fsu.edu;},
  Lingjiong Zhu\,\footnote{Corresponding Author. Department of Mathematics, Florida State University, 1017 Academic Way, Tallahassee, FL-32306, United States of America; Email: zhu@math.fsu.edu.
  }
\end{center}

\begin{center}
 \today
\end{center}

%\title[Optimal Investment in a Dual Risk Model]{Optimal Investment in a Dual Risk Model}
%
%\author{Arash Fahim}
%\author{Lingjiong Zhu}
%\address
%{Department of Mathematics \newline
%\indent Florida State University \newline
%\indent 1017 Academic Way \newline
%\indent Tallahassee, FL-32306 \newline
%\indent United States of America}
%\email{
%fahim@math.fsu.edu, zhu@math.fsu.edu}
%
%\date{5 March.}
%\subjclass[2010]{62P05;91B30;93E20} %Applications to actuarial sciences and financial mathematics; 
%Risk theory, insurance;Optimal stochastic control 
%\keywords{dual risk model, state-dependent, minimizing ruin probability, optimal investment.}
%%
%%

\begin{abstract}
Dual risk models are popular for modeling
a venture capital or high tech company, for which
the running cost is deterministic and the profits
arrive stochastically over time. 
Most of the existing literature on dual risk models
concentrated on the optimal dividend strategies.
In this paper, we propose to study the optimal investment
strategy on research and development for the dual risk models
to minimize the ruin probability of the underlying company. 
We will also study the optimization problem when in addition the investment
in a risky asset is allowed.
\end{abstract}

%\maketitle

\section{Introduction}

The classical Cram\'{e}r-Lundberg model, or the classical compound Poisson risk model assumes
that the surplus process of an insurance company follows the dynamics:
\begin{equation}\label{classical}
dX_{t}=\rho dt-dJ_{t},\qquad X_{0}=x>0,
\end{equation}
where $\rho>0$ is the premium rate and $J_{t}=\sum_{i=1}^{N_{t}}Y_{i}$ is a compound Poisson process,
where $N_{t}$ is a Poisson process with intensity $\lambda>0$ and claim sizes $Y_{i}$ are i.i.d. positive random variables
independent of the Poisson process with $\mathbb{E}[Y_{1}]<\infty$. One central question in the ruin theory is to study 
the ruin probability $\mathbb{P}(\tau<\infty)$, where $\tau:=\inf\{t>0:X_{t}<0\}$.

In recent years, there have been a lot of studies in the insurance and finance literature
on the so-called dual risk model, see e.g. \cite{Afonso, Avanzi, AvanziII, BE,CheungI,CheungII,Ng,NgII, RCE,YS}, 
with wealth process following the dynamics:
\begin{equation}\label{dual}
dX_{t}=-\rho dt+dJ_{t},\qquad X_{0}=x>0,
\end{equation}
where $\rho>0$ is the cost of running the company and $J_{t}=\sum_{i=1}^{N_{t}}Y_{i}$, 
is the stream of profits, where $N_{t}$
is a Poisson process with intensity $\lambda>0$
and $Y_{i}$ are i.i.d. $\mathbb{R}^{+}$ valued random variables
with common probability density function $p(y)$, $y>0$, independent of the Poisson process. 
The dual risk model is used to model the wealth of a venture capital,
whose profits depend on the research and development. 
The classical risk model \eqref{classical} is most often interpreted as the surplus of an insurance company.
On the other hand, the dual risk model \eqref{dual} can be understood as the wealth of a venture capital
or high tech company. The analogue of the premium in the classical model is the running cost
in the dual model, and the claims become the future profits of the company.
The ruin probability and the Laplace transform of the ruin time have been well
studied for the dual risk model; see e.g. Afonso et al. \cite{Afonso}.
When there is a random delay
for the innovations turned to profits, the dual risk model
becomes time inhomogeneous and the ruin probabilities and the distribution of the ruin times
are studied in \cite{ZhuDelayed}. 

One of the most fundamental questions in the dual risk model 
is the optimal dividend strategy.
Avanzi et al. \cite{Avanzi} worked on optimal dividends in the dual risk model where
the optimal strategy is a barrier strategy. Avanzi et al. \cite{AvanziII} studied a dividend barrier strategy for the dual risk model 
whereby dividend decisions are made only periodically, but still allow ruin to occur at any time.
A dual model with a threshold dividend strategy, with exponential interclaim times
was studied in Ng \cite{Ng}.
Afonso et al. \cite{Afonso} also worked on dividend problem in the dual risk model, assuming exponential interclaim times.
A new approach for the calculation of expected discounted dividends was presented
and ruin and dividend probabilities, number of dividends, time to a dividend, 
and the distribution for the amount of single dividends were studied.
Dividend moments in the dual risk model were considered in Cheung and Drekic \cite{CheungII}. They derived 
integro-differential equations for the moments of the total discounted dividends which can be solved explicitly 
assuming the jump size distribution has a rational Laplace transform.
The expected discounted dividends assuming
the profits follow a Phase Type distribution were studied in Rodr\'{i}guez et al. \cite{RCE} . 
The Laplace transform of the ruin
time, expected discounted dividends for the Sparre-Andersen dual model
were derived in Yang and Sendova \cite{YS}.
More recently, \cite{Yang2020} obtained an explicit expression
of the expected discounted discounted dividends in a dual risk model
with the threshold dividend strategy
and the optimal threshold level was derived.
\cite{Avanzi2020} considered the optimal periodic dividend strategies
for a general class of dual risk models with fixed transaction costs.
In \cite{FZII}, they obtained the asymptotic analysis for optimal dividends 
in the dual risk model.
\cite{Liu2023} studied the optimal dividend strategy for the dual model
with surplus-dependent expense.

So far the optimization problems studied in the literature on dual risk models are
almost exclusively devoted to the optimal dividend strategy. 
In this paper, we consider a different type of optimization problem. 
For a venture capital, or a high tech company, the investment strategy
on research and development (R\&D) is crucial. A decision to increase
the investment on research and development will increase the running cost
of the company, but that will also boost the possibility of the future profits. 
Therefore, we believe that it is of fundamental interest to understand
the optimal investment strategy to strengthen the position of the company.

It is well known that research and development is a basic engine of economic
and social growth. It is a considerable amount of spending among many leading
corporations in the world. A 2014 FORTUNE article listed the top ten biggest
R\&D spenders worldwide in the year 2013, including Volkswagen, Samsung, 
Intel, Microsoft, Roche, Novartis, Toyota, Johnson \& Johnson, Google and Merck, 
with Intel spent as much as 20.1\% of their revenue on R\&D, see \cite{FORTUNE}. 
Many technology giants increase their R\&D spending consistently, year over year, 
see e.g. Table~\ref{MainTable} for the R\&D and percentage of the revenues of Alphabet, 
Amazon, Tesla in the years 2018-2021\footnote{\label{note1}Available at \texttt{https://www.macrotrends.net/}}.
Notice that in the case of Alphabet, even though the R\&D expenditure increases year by year,
it increases in line with the increase of the total revenues so that as the percentage of revenues,
the  number does not change much. The same can be said about Amazon.
For some companies, both the absolute R\&D expenditure amount and the percentage as the revenues
remain reasonably stable, see e.g. Table~\ref{MainTable} for Merck in the years 2018-2021, 
with the year of 2020 the only exception which witnessed an unusually high R\&D expenditure\footnotemark[3].
%\footnote{Available at \texttt{https://urldefense.com/v3/__https://www.macrotrends.net__;!!PhOWcWs!1iGOUHQz1GQGZaAU9ezfBG8fO4DP9G-ZKdbSOXtMwNiivWIRJ28ehxt8HR7L3vy-xIMQD6zb5uzrrQ$ }}.
For some companies, both the absolute R\&D expenditure amount
and the revenues can change dramatically, 
see e.g. Table~\ref{MainTable} for Alphabet, Amazon, Tesla in the years 2018-2021\footnotemark[3].
%\footnote{Available at \texttt{https://urldefense.com/v3/__https://www.macrotrends.net__;!!PhOWcWs!1iGOUHQz1GQGZaAU9ezfBG8fO4DP9G-ZKdbSOXtMwNiivWIRJ28ehxt8HR7L3vy-xIMQD6zb5uzrrQ$ }}.
The case of Tesla is exceptional but not unusual for a new high-tech company in the sense
that the total revenues has astronomical growth and the R\&D expenditure
as the percentage of revenues actually declines during this period even though
it had a spectacular increase in R\&D expenditure in the year of 2021. 
Another company that has enjoyed similar phenomenal growth as Tesla is the Amazon, see Table~\ref{MainTable}.
But Amazon's overall growth is not as fast as Tesla.

Since it is expensed rather than capitalized, cuts on research
and development increases in profit in the short term, but it can hurt the strength of a company
in the long run, even if the detrimental impact of the cuts may not be felt for a few years. 
In the most recent recession, firms with revenues greater than 100 million USD reduced
their research and development intensity (divided by revenue) by 5.6\%, even though
the advertising intensity actually increased 3.4\%, see \cite{HBS}. In the long run, 
the research and development does help the company grow and increase the value
of a company. Using a measure of the so-called research quotient, a study over
all publicly traded US companies from 1981 through 2006 suggested
that a 10\% increase in research quotient, results an increase in market value of 1.1\%, see \cite{HBS}.
Indeed, the US government also encourages the research and development activities.
The Research \& Experimentation Tax Credit, is a general business tax credit passed
by the Congress in 1981, as a response to the concerns that research spending declines had adversely 
affected the country's economic growth, productivity gains, and competitiveness within the global marketplace.
According to a study by Ernst \& Young, in the year 2005, 17,700 US corporations claimed 6.6 billion USD R\&D tax credits
on their tax returns\footnote{See Supporting innovation and economic growth: The broad impact of the R\&D credit in 2005.
Prepared by Ernst \& Young LLP for the R\&D Coalition. April 2008. 
Available at \texttt{https://www.scribd.com/document/207312025/e-y-RatiosR-DTaxCreditStudy2008final}}.

\begin{table}[htb]
\centering % used for centering table
\begin{tabular}{|c|c|c|c|c|} % centered columns 
\hline % inserts single horizontal line
\hline
Alphabet & 2018 & 2019 & 2020 & 2021 
\\
\hline
R\&D (millions) & \$21,419 & \$26,018 & \$27,573 & \$31,562
\\
Revenues (millions) & \$136,819 & \$161,857 & \$182,527 & \$257,637
\\
As \% of Revenues & 15.7\% & 16.1\% & 15.1\% & 12.3\%
\\
\hline %inserts single line
\hline % inserts single horizontal line
Amazon & 2018 & 2019 & 2020 & 2021 
\\
\hline
R\&D (millions) & \$28,837 & \$35,931 & \$42,740 & \$56,052
\\
Revenues (millions) & \$232,887 & \$280,522 & \$386,064 & \$469,822
\\
As \% of Revenues & 12.4\% & 12.8\% & 11.1\% & 11.9\%

\\
\hline %inserts single line
\hline % inserts single horizontal line
Tesla & 2018 & 2019 & 2020 & 2021
\\
\hline
R\&D (millions) & \$1,460 & \$1,343 & \$1,491 & \$2,593
\\
Revenues (millions) & \$21,461 & \$24,578 & \$31,536 & \$58,823
\\
As \% of Revenues & 6.8\% & 5.5\% & 4.7\% & 4.4\%
\\
\hline %inserts single line
\hline % inserts single horizontal line
Merck & 2018 & 2019 & 2020 & 2021 
\\
\hline
R\&D (millions) & \$9,752 & \$9,724 & \$13,397 & \$12,245
\\
Revenues (millions) & \$42,294 & \$39,121 & \$41,518 & \$48,704
\\
As \% of Revenues & 23.1\% & 24.9\% & 32.3\% & 25.1\%
\\
\hline %inserts single line
\end{tabular}
\caption{R\&D spending by Alphabet, Amazon, Tesla and Merck during 2018-2021.}
\label{MainTable} % is used to refer this table in the text
\end{table}

Optimal investment problems have a long history in finance and related fields.
For example, \cite{Merton,Merton1971} formulated and studied the problem
of optimal allocation between risky assets and a risk-free asset to maximize expected utility;
\cite{FZ1991} considered the optimal investment and consumption problem where short-selling is not allowed
but borrowing is allowed. 
\cite{Davis1990,SS1994} studied optimal investment and consumption with proportional transaction costs
and \cite{Morton1995} considered optimal portfolio management with fixed transaction costs.
\cite{GZ1993} studied optimal investment strategies for controlling drawdowns.
\cite{FlemingSheu} studied the optimal investment problem to maximize the long-term growth rate
of expected utility of wealth.
\cite{Hipp} studied the optimal investment for insurers.
\cite{Carr2001} considered the problem of optimal investment in a risky asset,
and in derivatives written on the price process of this asset.
Finally, there are also a limited number
of works on the optimal venture capital investments, see e.g. \cite{BE}.
However, to the best of our knowledge, the optimal investment in research and development
for the dual risk model has never been studied in the previous literature, and our paper is the first one
that considers this problem.

We propose to study the optimal investment
strategy on research and development for the dual risk models
to minimize the ruin probability of the underlying company. 
In addition to the investment in research and development, we will
also allow the investment in a risky asset, e.g. a market index. 
The possibility that an insurer can invest part of the surplus into a risky asset
to minimize the ruin probability has been studied by Browne \cite{Browne}
for the case that the insurance business is modeled by a Brownian motion with constant drift
and the risky asset is modeled as a geometric Brownian motion.
Later, Hipp and Plum \cite{Hipp} studied the optimal investment in a market index
for insurers in the classical compound Poisson risk model. 
We will study the the optimal investment problem when both investment
in research and development and investment in a risky asset are allowed.
Unlike the problem of minimizing
the ruin probability for an insurer in the classical risk model \cite{Hipp},
we will obtain closed-form formulas in the dual risk model. 

Since the works of Browne \cite{Browne} and Hipp and Plum \cite{Hipp}, 
the optimal investment in the market for the classical risk model
and related models have been extensively studied. 
In Liu and Yang \cite{Liu}, they generalized
the works by Hipp and Plum \cite{Hipp} by including a risk-free asset.
In Schmidli \cite{Schmidli}, the optimization problem
of minimizing the ruin probability for the classical risk model
is studied when investment in a risky assent and proportional reinsurance
are both allowed. The asymptotic ruin probability
for the classical risk model under the optimal investment
in a risky asset is obtained by Gaier et al. \cite{Gaier} for large initial wealth.
The asymptotics for small claim sizes were obtained in 
Hipp \cite{Hipp2004}.
In Yang and Zhang \cite{YZ}, they studied the optimal investment
for an insurer when
the risk process is compound Poisson process perturbed by a standard Brownian motion 
and the insurer can invest in the money market and in a risky asset. 
In Gaier and Grandits \cite{GG}, the case when the claim sizes
are of regularly varying tails were studied. The results
were then extended to include interest rates in \cite{GG2}.
The case for subexponential claims was investigated in Schmidli \cite{Schmidli2}.
In Promislow and Young \cite{Promislow}, they studied the problem of
minimizing the probability of ruin of an insurer when the claim process is modeled by 
a Brownian motion with drift optimizing over the investment in a risky asset 
and purchasing quota-share reinsurance.
In Wang et al. \cite{Wang}, 
they adopted the martingale approach to study 
the optimal investment problem for an insurer when the insurer's risk process is modeled by a L\'{e}vy process 
with possible investment in a security market described by the standard Black-Scholes model.
When the underlying investor is an individual rather than an insurance company, 
the optimal investment problem of minimizing the ruin probability
was studied in e.g. Bayraktar and Young \cite{BY}.
In Azcue and Muler \cite{Azcue}, they studied the minimization of the ruin probability
for the classical risk model with possible investment in a risky asset
that follows a geometric Brownian motion under the borrowing constraints.
There have been many other works in this area. For a survey, we refer to
Paulsen \cite{Paulsen} and the references therein.

This paper is organized as follows.
We first introduce a state-dependent dual risk model that generalizes the classical dual risk model (Section~\ref{sec:state}).
When the size of a company increases,
the cost usually also increases, while the resource of income will also increase in general, 
which makes it natural to study a state-dependent dual risk model.
Then we study the optimal investment strategy on research and development
to minimize the ruin probability of the company (Section~\ref{sec:min}), 
with a further discussion of a state-dependent example in Section~\ref{sec:example}. 
As a special case, the state-independent model is discussed in Section~\ref{sec:independent},
with a further discussion of a state-independent example in Section~\ref{sec:indep:example}.
Next, we study the joint investment in research and development and a market index 
to minimize the ruin probability in Section~\ref{sec:market}.
Finally, we provide some numerical studies in Section~\ref{sec:numerical} to better understand how the minimized ruin probability 
and the optimal strategy depend on the parameters in the model.

%%%%%%%%%%%%%%%%%%%%%%%%%%%%%%%%%%%%%%%
\section{A State-Dependent Dual Risk Model}\label{sec:state}

We introduce a state-dependent dual risk model with the wealth process
being defined as follows:
\begin{equation}\label{state:dependent:X}
dX_{t}=-\rho(X_{t})dt+dJ_{t},\qquad X_{0}>0,
\end{equation}
where $J_{t}=\sum_{i=1}^{N_{t}}Y_{i}$, where $N_{t}$ is a simple point process with
intensity $\lambda(X_{t-})$ at time $t$,
and $Y_{i}$ are i.i.d. positive random variables with finite mean
and independent of $\mathcal{F}_{\tau_{i}-}$, where $\mathcal{F}_{t}$
is the natural filtration generated by $X_{t}$ process, $\tau_{i}$
is the $i$-th arrival time of $N_{t}$ and
we further assume that $\rho(\cdot),\lambda(\cdot):\mathbb{R}_{+}\rightarrow\mathbb{R}_{+}$
are increasing functions. The state-dependent dual risk model \eqref{state:dependent:X}
was first introduced in \cite{ZhuState}, in which ruin probability and the Laplace
transform of the ruin time were studied.

The motivation of introducing state-dependence for the dual risk model
is the following. First, the cost of a company usually increases as the size of the company increases. 
For example, the running cost of a small business and a Fortune 500 company
are vastly different. Second, as the size of a company increases, the arrival intensity
of the future profits might increase. It may be due to the fact that the larger a company gets,
the more resources for income it will get. It is also well known in the finance literature
that as a company gets larger and stronger, it can enjoy more benefits, e.g. net present value (NPV),
which for example might be due to the opportunities brought by franchising. 
As we can see from Table~\ref{MainTable}, 
the R\&D expenditure may be far from being constant as the size of the company
and the revenue of the company change. More realistically, the R\&D expenditure
and other costs of running the company should be state-dependent.

Let $\tau:=\inf\{t>0: X_{t}\leq 0\}$ be the ruin time of $X_{t}$ process. The eventual ruin probability
is defined as the function $\psi(x):=\mathbb{P}(\tau<\infty|X_{0}=x)$ to emphasize the dependence 
on the initial wealth $x$.
Note that for the state-independent dual risk model, $\lambda(\cdot)\equiv\lambda$
and $\rho(\cdot)\equiv\rho$, 
under the assumption $\lambda\mathbb{E}[Y_{1}]>\rho$, the ruin 
probability $\psi(x)$ is less than $1$. Indeed, $\psi(x)=e^{-\alpha x}$, where
$\alpha>0$ is the unique solution to the equation; see e.g. Afonso et al. \cite{Afonso}:
\begin{equation}\label{alphaEqn}
\rho\alpha+\lambda\int_{0}^{\infty}[e^{-\alpha y}-1]p(y)dy=0.
\end{equation}

For the state-dependent dual risk model, there is no simple closed-form
formula for the ruin probability. Nevertheless, for the special case when
the jump sizes $Y_{i}$ are i.i.d. exponentially distributed,
there is a closed-form expression for the ruin probability; see Theorem~1 in \cite{ZhuState}.

Finally, we notice that the $X_{t}$ process in \eqref{state:dependent:X} is an extension
of the (nonlinear) marked Hawkes process with exponential kernel (see e.g. \cite{Hawkes,Bremaud,GZI,GZII,ZhuI}), 
that is, $N_{t}$ is a simple point process with intensity $\lambda(X_{t})$, where
\begin{equation}\label{intensity:dynamics}
X_{t}:=X_{0}e^{-\beta t}+\sum\nolimits_{i:\tau_{i}<t}Y_{i}e^{-\beta(t-\tau_{i})},
\end{equation}
where $\tau_{i}$ is the $i$-th arrival time of $N_{t}$,  
and $Y_{i}$ are i.i.d. positive random variables independent of $\mathcal{F}_{\tau_{i}-}$
with finite mean and $X_{0},\beta>0$ are given constants, 
where $X_{t}$ in \eqref{intensity:dynamics} satisfies the dynamics \eqref{state:dependent:X} with $\rho(x):=\beta x$. 
When $\lambda(\cdot)$ is linear, it is called linear Hawkes process, named after Hawkes \cite{Hawkes}.
When $\lambda(\cdot)$ is nonlinear, the Hawkes process is said to be nonlinear which
was first introduced by Br\'{e}maud and Massouli\'{e} \cite{Bremaud}.
Hawkes processes have wide applications in finance, neuroscience, social networks, criminology, seismology, and many other fields;
see \cite{GaoFZhu2} and the references therein.
Since the $X_{t}$ process in \eqref{state:dependent:X} is an extension
of the (nonlinear) marked Hawkes process with exponential kernel, our paper
also contributes to the literature of the Hawkes process.

%%%%%%%%%%%%%%%%%%%%%%%%%%%%%%%%%%%%%%%
\section{Minimizing the Ruin Probability}\label{sec:min}

In this section, we study the optimization control problem of minimizing the ruin probability
for the dual risk model.
The management of the underlying company can decide whether or not to 
increase the capital spending on research and development to boost the future profits. 
Our goal is to find the optimal expenditure on research and development to minimize the 
probability that the company is eventually ruined. 

Before we proceed, we introduce the investment on research and development $C\in\mathcal{C}$, 
where $\mathcal{C}$ is the set of all admissible strategies, defined as
\begin{align}
\mathcal{C}:=\left\{C:[0,\infty)\times\Omega\rightarrow\mathbb{R}_{\geq 0}:\text{$C$ is progressively measurable, bounded and predictable}\right\}.
\end{align}
Given the control $C\in\mathcal{C}$, the wealth process has the dynamics
\begin{equation}
dX_{t}^{C}=-(\rho(X_{t})+C_{t})dt+dJ_{t}^{C},
\end{equation}
where $\rho:\mathbb{R}_{+}\rightarrow\mathbb{R}_{+}$ is increasing
and $J_{t}=\sum_{i=1}^{N_{t}}Y_{i}$, where $Y_{i}$ are defined
same as before and $N_{t}$ is a simple point process with
intensity $F(X_{t-},C_{t-})$ at time $t$,
where $F(x,c):\mathbb{R}_{+}\times\mathbb{R}_{+}\rightarrow\mathbb{R}_{+}$ is measurable in $(x,c)$ and increasing in both $x$ and $c$
and $F(x,0)=\lambda(x)$ for every $x\in\mathbb{R}_{+}$, where $\lambda:\mathbb{R}_{+}\rightarrow\mathbb{R}_{+}$ is increasing.

We define $\tau^{C}$ as the ruin time of the $X^{C}$
process under the control $C\in\mathcal{C}$ 
by $\tau^{C}:=\inf\{t\geq 0: X^{C}_{t}\leq 0\}$.
We are interested in studying the optimization problem:
\begin{equation}
V(x):=\min_{C\in\mathcal{C}}\mathbb{P}\left(\tau^{C}<\infty|X_{0}^{C}=x\right).
\end{equation}
From the optimal control point of view, it is also interesting to study the state-dependent case, 
which adds a technical contribution to the literature of stochastic optimal control theory.
We will show that the optimal strategy is in general state-dependent
when the underlying dual risk model is state-dependent, and it exhibits
a closed-form expression.

\begin{theorem}\label{MainThm}
The optimal strategy $C^{\ast}$ is given by
\begin{equation}\label{arg:min:eqn}
C^{\ast}_{t}=C^{\ast}(X_{t})\in\argmin_{C\geq 0}\frac{\rho(X_{t})+C}{F(X_{t},C)},
\end{equation}
provided that the minimum exists.
\end{theorem}

\begin{proof}[Proof of Theorem~\ref{MainThm}]
For any control $C\in\mathcal{C}$, we have
\begin{equation}\label{X:controlled:eqn}
dX^{C}_{t}=-(\rho(X_{t})+C_{t})dt+dJ^{C}_{t},
\end{equation}
where $J^{C}_{t}=\sum_{i=1}^{N^{C}_{t}}Y_{i}$, where $N^{C}_{t}$
is a simple point process with intensity $F(X_{t-},C_{t-})$ at time $t$
and $Y_{i}$ are i.i.d. with probability density function $p(y)$ defined as before.

Let us introduce a random time change and define the random time $T(t)$ via:
\begin{equation}\label{eqn:random:time}
\int_{0}^{T(t)}F(X_{s-},C_{s-})ds=t.
\end{equation}
Then, it is easy to see that $T(0)=0$ and $T(t)\rightarrow\infty$ as $t\rightarrow\infty$ since $C\in\mathcal{C}$
is bounded. It follows from \eqref{X:controlled:eqn} that
\begin{equation}
dX_{T(t)}=-(\rho(X_{T(t)})+C_{T(t)})dT(t)+dJ^{C}_{T(t)}.
\end{equation}
Under the random time change \eqref{eqn:random:time}, we have
\begin{equation*}
\frac{dT(t)}{dt}=\frac{1}{F(X_{t},C_{t})},
\end{equation*}
and $J^{C}_{T(t)}$ is distributed as $\overline{J}_{t}:=\sum_{i=1}^{\overline{N}_{t}}Y_{i}$,
where $\overline{N}_{t}$ is a standard Poisson process with intensity $1$; 
see e.g. Meyer \cite{Meyer} for the random time change for simple point processes.
Therefore, we obtain
\begin{equation}
dX_{T(t)}=-\frac{\rho(X_{T(t)})+C_{T(t)}}{F(X_{t},C_{t})}dt+d\overline{J}_{t}.
\end{equation}
Let us also notice that
$\mathbb{P}(\text{$X_{t}$ ever gets ruined})
=\mathbb{P}(\text{$X_{T(t)}$ ever gets ruined})$.
Therefore, the optimal strategy is given by \eqref{arg:min:eqn}
provided that the minimum exists.
This completes the proof.
\end{proof}

In Theorem~\ref{MainThm}, we obtain the closed-form
expression of the optimal strategy $C^{\ast}$. However, we
do not have a closed-form for the minimized ruin probability
$\mathbb{P}(\tau^{C^{\ast}}<\infty|X_{0}^{C^{\ast}}=x)$.
Next, we will show that we can obtain a closed-form
for the ruin probability in the special case
when the jump sizes $Y_{i}$ follow exponential distributions.
We first recall the following result from \cite{ZhuState}, 
which states that the ruin probability for a state-dependent dual
risk model with the exponentially distributed $Y_{i}$ has a closed-form expression.

\begin{theorem}[Theorem~1 in \cite{ZhuState}]\label{MainThm2}
Consider the dual risk model:
$dX_{t}=-\rho(X_{t})dt+dJ_{t}$,
where $X_{0}=x>0$, $J_{t}=\sum_{i=1}^{N_{t}}Y_{i}$, where $Y_{i}$ are
exponential random variables with
the probability density function $p(y)=\nu e^{-\nu y}$, $\nu>0$,
and $N_{t}$ is a simple point process with
intensity $\lambda(X_{t-})$ at time $t$,
where $\rho(\cdot),\lambda(\cdot):\mathbb{R}_{+}\rightarrow\mathbb{R}_{+}$
are increasing functions.
Then,
\begin{equation}
\mathbb{P}\left(\tau<\infty|X_{0}=x\right)
=\frac{\int_{x}^{\infty}\frac{\lambda(y)}{\rho(y)}
e^{\nu y-\int_{0}^{y}\frac{\lambda(w)}{\rho(w)}dw}dy}
{\int_{0}^{\infty}\frac{\lambda(y)}{\rho(y)}
e^{\nu y-\int_{0}^{y}\frac{\lambda(w)}{\rho(w)}dw}dy}.
\end{equation}
\end{theorem}

As a corollary of Theorem~\ref{MainThm} and Theorem~\ref{MainThm2},
we obtain the closed-form for the minimized ruin probability
when the jump sizes $Y_{i}$ are i.i.d. exponentially distributed.

\begin{proposition}\label{explicitProb}
Assume $p(y)=\nu e^{-\nu y}$, where $\nu>0$. 
Also assume that the integral $\int_{0}^{\infty}\frac{F(y,C^{\ast}(y))}{\rho(y)+C^{\ast}(y)}
e^{\nu y-\int_{0}^{y}\frac{F(w,C^{\ast}(w))}{\rho(w)+C^{\ast}(w)}dw}dy$ exists
and is finite. Then,
\begin{equation}
\min_{C\in\mathcal{C}}\mathbb{P}\left(\tau^{C}<\infty|X_{0}^{C}=x\right)
=\frac{\int_{x}^{\infty}\frac{F(y,C^{\ast}(y))}{\rho(y)+C^{\ast}(y)}
e^{\nu y-\int_{0}^{y}\frac{F(w,C^{\ast}(w))}{\rho(w)+C^{\ast}(w)}dw}dy}
{\int_{0}^{\infty}\frac{F(y,C^{\ast}(y))}{\rho(y)+C^{\ast}(y)}
e^{\nu y-\int_{0}^{y}\frac{F(w,C^{\ast}(w))}{\rho(w)+C^{\ast}(w)}dw}dy}.
\end{equation}
\end{proposition}

\begin{proof}[Proof of Proposition~\ref{explicitProb}]
The proposition follows immediately from Theorem~\ref{MainThm} and Theorem~\ref{MainThm2}. 
\end{proof}

%%%%%%%%%%%%%%%%%%%%%%%%%%
\subsection{A State-Dependent Example}\label{sec:example}

In this section, we study a state-dependent example
in details. We assume that
\begin{equation}
F(x,c)=\lambda(x)+\delta(x)c^{\gamma},
\end{equation}
where $\delta(\cdot):\mathbb{R}_{+}\rightarrow\mathbb{R}_{+}$
is increasing, and $\gamma>0$.
We recall that $\lambda(\cdot)$ is increasing 
and thus $\lambda(\cdot)\geq\lambda(0)>0$.
Let us also assume that
$\rho(\cdot)\leq\rho(\infty)<\infty$. 
Under our assumptions, $F(x,c)$ is increasing in both $x$
and $c$, and $F(x,0)=\lambda(x)$.

Notice when $\gamma>1$, for any constant strategy $C_{t}\equiv C$, where $C>0$ is sufficiently large, 
the ruin probability is bounded above by 
the ruin probability of the following process:
\begin{equation}\label{infty:process}
dX_{t}=-(\rho(\infty)+C)dt+dJ_{t},
\end{equation}
where $J_{t}=\sum_{i=1}^{N_{t}}Y_{i}$
is compound Poisson with 
$N_{t}$ being the Poisson process with intensity
$\lambda(0)+\delta(0)C^{\gamma}$.

By the ruin probability for state-independent dual risk model (see e.g. Afonso \cite{Afonso}), 
the ruin probability of the $X_{t}$ process defined in \eqref{infty:process} is given by $e^{-\alpha_{C}x}$, where $\alpha_{C}$ is the unique positive solution
to the equation:
\begin{equation}
(\rho(\infty)+C)\alpha_{C}+(\lambda(0)+\delta(0)C^{\gamma})\int_{0}^{\infty}[e^{-\alpha_{C}y}-1]p(y)dy=0.
\end{equation}
We can rewrite this equation as:
\begin{equation}
\frac{\rho(\infty)+C}{\lambda(0)+\delta(0)C^{\gamma}}\alpha_{C}=\int_{0}^{\infty}[1-e^{-\alpha_{C}y}]p(y)dy.
\end{equation}
The right hand side of the above equation is bounded between $0$ and $1$.
In the left hand side of the above equation, $\lim_{C\rightarrow\infty}\frac{\rho(\infty)+C}{\delta(0)C^{\gamma}}=0$,
which implies that $\alpha_{C}\rightarrow\infty$ as $C\rightarrow\infty$. 
Hence, $V(x)\leq\inf_{C>0}e^{-\alpha_{C}x}=0$ and the minimized ruin probability
is trivially zero.

Therefore, in the rest of this section, we only consider two cases: (i) $0<\gamma<1$; (ii) $\gamma=1$.

%%%%%%%%%%%%%%%%%%%%%%%%%%%%%%%%%%%%%%%%%%%%%%%%
\subsubsection{The \texorpdfstring{$0<\gamma<1$}{0<gamma<1} Case}

Under the assumption that $0<\gamma<1$, it is easy to see from Theorem~\ref{MainThm} that
the optimal strategy $C_{T(t)}$ is the strategy that minimizes
the drift:
\begin{equation}\label{above:eqn}
\frac{\rho(X_{T(t)})+C_{T(t)}}{\lambda(X_{T(t)})+\delta(X_{T(t)})C_{T(t)}^{\gamma}}.
\end{equation}
It is easy to compute from \eqref{above:eqn} that the optimal strategy satisfies
\begin{equation}
\lambda(X_{T(t)})+\delta(X_{T(t)})(1-\gamma)C_{T(t)}^{\gamma}=\rho(X_{T(t)})\delta(X_{T(t)})\gamma C_{T(t)}^{\gamma-1}.
\end{equation}
Therefore, for any $t>0$, the optimal strategy $C_{t}$ satisfies
\begin{equation}
\lambda(X_{t})+\delta(X_{t})(1-\gamma)C_{t}^{\gamma}=\rho(X_{t})\delta(X_{t})\gamma C_{t}^{\gamma-1}.
\end{equation}
It is clear that the optimal strategy $C_{t}$ is a function of $X_{t}$ and we denote it as $C^{\ast}(X_{t})$.
Then under the optimal strategy,
\begin{equation}
dX_{t}=-(\rho(X_{t})+C^{\ast}(X_{t}))dt+dJ_{t},
\end{equation}
where $J_{t}=\sum_{i=1}^{N_{t}}Y_{i}$, where $N_{t}$ has intensity
$\lambda(X_{t-})+\delta(X_{t-})C^{\ast}(X_{t-})^{\gamma}$ at time $t$.

When the probability density function $p(y)=\nu e^{-\nu y}$ of jump sizes $Y_{i}$ is exponential, 
it follows from Proposition~\ref{explicitProb} that we have the following result:

\begin{proposition}\label{explicitProb:exp}
Assume $p(y)=\nu e^{-\nu y}$, where $\nu>0$. 
Also assume that the integral $\int_{0}^{\infty}\frac{\lambda(y)+\delta(y)C^{\ast}(y)^{\gamma}}{\rho(y)+C^{\ast}(y)}
e^{\nu y-\int_{0}^{y}\frac{\lambda(w)+\delta(w)C^{\ast}(w)^{\gamma}}{\rho(w)+C^{\ast}(w)}dw}dy$ exists
and is finite. Then,
\begin{equation}\label{eqn:V:x}
V(x)=\frac{\int_{x}^{\infty}\frac{\lambda(y)+\delta(y)C^{\ast}(y)^{\gamma}}{\rho(y)+C^{\ast}(y)}
e^{\nu y-\int_{0}^{y}\frac{\lambda(w)+\delta(w)C^{\ast}(w)^{\gamma}}{\rho(w)+C^{\ast}(w)}dw}dy}
{\int_{0}^{\infty}\frac{\lambda(y)+\delta(y)C^{\ast}(y)^{\gamma}}{\rho(y)+C^{\ast}(y)}
e^{\nu y-\int_{0}^{y}\frac{\lambda(w)+\delta(w)C^{\ast}(w)^{\gamma}}{\rho(w)+C^{\ast}(w)}dw}dy}.
\end{equation}
\end{proposition}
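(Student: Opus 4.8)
The plan is to combine two ingredients: the optimality of the feedback control $C^{\ast}$, and an explicit solution of the integro-differential equation for the ruin probability of the optimally controlled process. Optimality of $C^{\ast}$ comes from the random time change performed above: the drift $\frac{\rho(x)+C}{\lambda(x)+\delta(x)C^{\gamma}}$ of the time-changed process is minimized pointwise by $C=C^{\ast}(x)$, and since the driving compound Poisson process $\overline{J}$ does not depend on the control, a smaller downward drift cannot increase the ruin probability (a pathwise comparison). Writing $\mu(x):=\rho(x)+C^{\ast}(x)$ and $\eta(x):=\lambda(x)+\delta(x)C^{\ast}(x)^{\gamma}$ for the drift and jump intensity of $X_{t}$ under $C^{\ast}$, the problem reduces to computing $V(x)=\mathbb{P}_{x}(\tau<\infty)$ for the process $dX_{t}=-\mu(X_{t})dt+dJ_{t}$ with $J$ of intensity $\eta(X_{t-})$ and jump density $p$.

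First I would establish that $V$ is bounded, satisfies $V(0)=1$ (the controlled process leaves $(0,\infty)$ only continuously, through $0$), has $\lim_{x\to\infty}V(x)=0$, and solves
\begin{equation*}
-\mu(x)V'(x)+\eta(x)\int_{0}^{\infty}[V(x+y)-V(x)]p(y)dy=0,
\end{equation*}
which is the statement that the generator of the controlled process annihilates $V$; this follows, modulo a verification argument in the spirit of Theorem~\ref{VeriThm}, by applying It\^o's formula to $V(X_{t\wedge\tau})$ and letting $t\to\infty$.

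Next, specializing to $p(y)=\nu e^{-\nu y}$, I would use the classical device for exponential claims. Put $I(x):=\int_{0}^{\infty}V(x+y)\nu e^{-\nu y}dy=\nu e^{\nu x}\int_{x}^{\infty}V(z)e^{-\nu z}dz$; then $I'(x)=\nu(I(x)-V(x))$, while the integro-differential equation gives $\mu(x)V'(x)=\eta(x)(I(x)-V(x))$. Eliminating $I$, the function $u(x):=\mu(x)V'(x)/\eta(x)$ satisfies the linear first-order ODE $u'(x)=\bigl(\nu-\eta(x)/\mu(x)\bigr)u(x)$, whence
\begin{equation*}
V'(x)=u(0)\,\frac{\eta(x)}{\mu(x)}\exp\!\left(\nu x-\int_{0}^{x}\frac{\eta(w)}{\mu(w)}dw\right).
\end{equation*}
Integrating and using $V(0)=1$, and using that the assumed finiteness of $\int_{0}^{\infty}\frac{\eta(y)}{\mu(y)}\exp(\nu y-\int_{0}^{y}\eta/\mu)\,dy$ forces $\lim_{x\to\infty}V(x)=0$ (roughly, it encodes the net-profit condition $\eta(y)\mathbb{E}[Y_{1}]>\mu(y)$ eventually, making the controlled process transient to $+\infty$), I would solve for the constant $u(0)$ and substitute $\mu=\rho+C^{\ast}$, $\eta=\lambda+\delta (C^{\ast})^{\gamma}$ to get the displayed formula.

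I expect the main obstacle to be the first step: rigorously reducing the optimal control problem to the ruin probability of the $C^{\ast}$-controlled process, and ensuring $V$ is regular enough (in particular $C^{1}$, so that It\^o's formula and the martingale argument apply). Once this is in place the remainder is a routine computation, and the integrability hypothesis in the statement is precisely what makes the normalization in the final step meaningful.
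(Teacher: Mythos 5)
Your proposal is correct and follows essentially the same route as the paper: the random time change reduces the problem to the ruin probability of the process controlled by the pointwise drift-minimizing feedback $C^{\ast}(x)$, and the closed-form expression is then obtained for exponential jumps by turning the integro-differential equation into a first-order ODE. The only difference is that the paper omits this last computation and cites Zhu \cite{Zhu} for it, whereas you carry it out explicitly (correctly) via the memoryless trick with $I(x)=\nu e^{\nu x}\int_{x}^{\infty}V(z)e^{-\nu z}dz$ and the boundary conditions $V(0)=1$, $V(\infty)=0$.
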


\begin{proof}[Proof of Proposition~\ref{explicitProb:exp}]
The proposition follows immediately from Proposition~\ref{explicitProb}. 
\end{proof}

Next, in the following example, we show that with particular model specifications, 
the optimal $C^{\ast}$ and
the minimized ruin probability $V(x)$ in \eqref{eqn:V:x}
admit a simpler closed-form formulas.

\begin{example}\label{StateExI}
Let $\rho(x)=\rho_{0}$, $\lambda(x)=\lambda_{0}(c_{1}x+c_{2})$, 
and $\delta(x)=\delta_{0}(c_{1}x+c_{2})$, where $\rho_{0},\lambda_{0},\delta_{0},c_{1},c_{2}$
are positive constants. Then, the optimal investment rate $C^{\ast}(x)$ 
is a constant $C^{\ast}(x)\equiv C_{0}$, where $C_{0}$ is the unique positive solution
to the equation:
\begin{equation}
\lambda_{0}+\delta_{0}(1-\gamma)C_{0}^{\gamma}=\rho_{0}\delta_{0}\gamma C_{0}^{\gamma-1}.
\end{equation}
Hence, the minimized ruin probability in \eqref{eqn:V:x} can be computed as:
\begin{align}
V(x)
&=\frac{\int_{x}^{\infty}\frac{\lambda_{0}+\delta_{0}C_{0}^{\gamma}}{\rho_{0}+C_{0}}(c_{1}y+c_{2})
e^{\nu y-\int_{0}^{y}\frac{\lambda_{0}+\delta_{0}C_{0}^{\gamma}}{\rho_{0}+C_{0}}(c_{1}w+c_{2})dw}dy}
{\int_{0}^{\infty}\frac{\lambda_{0}+\delta_{0}C_{0}^{\gamma}}{\rho_{0}+C_{0}}(c_{1}y+c_{2})
e^{\nu y-\int_{0}^{y}\frac{\lambda_{0}+\delta(w)C_{0}^{\gamma}}{\rho_{0}+C_{0}}(c_{1}w+c_{2})dw}dy}
\\
&=\frac{\int_{x}^{\infty}(c_{1}y+c_{2})
e^{\left(\nu-\frac{\lambda_{0}+\delta_{0}C_{0}^{\gamma}}{\rho_{0}+C_{0}}c_{2}\right)y
-\frac{\lambda_{0}+\delta_{0}C_{0}^{\gamma}}{\rho_{0}+C_{0}}\frac{c_{1}}{2}y^{2}}dy}
{\int_{0}^{\infty}(c_{1}y+c_{2})
e^{\left(\nu-\frac{\lambda_{0}+\delta_{0}C_{0}^{\gamma}}{\rho_{0}+C_{0}}c_{2}\right)y
-\frac{\lambda_{0}+\delta_{0}C_{0}^{\gamma}}{\rho_{0}+C_{0}}\frac{c_{1}}{2}y^{2}}dy}
\nonumber
\\
&=
\frac{\frac{1}{4d^{3/2}}e^{-dy^{2}}\left[\sqrt{\pi}e^{\frac{c^{2}}{4d}+dy^{2}}(ac+2bd)
\text{erf}(\frac{2dy-c}{2\sqrt{d}})-2a\sqrt{d}e^{cy}\right]\bigg|_{y=x}^{\infty}}
{\frac{1}{4d^{3/2}}e^{-dy^{2}}\left[\sqrt{\pi}e^{\frac{c^{2}}{4d}+dy^{2}}(ac+2bd)
\text{erf}(\frac{2dy-c}{2\sqrt{d}})-2a\sqrt{d}e^{cy}\right]\bigg|_{y=0}^{\infty}}
\nonumber
\\
&=\frac{2a\sqrt{d}e^{cx-dx^{2}}+\sqrt{\pi}e^{\frac{c^{2}}{4d}}(ac+2bd)
\text{erfc}(\frac{2dx-c}{2\sqrt{d}})}{2a\sqrt{d}+\sqrt{\pi}e^{\frac{c^{2}}{4d}}(ac+2bd)\text{erfc}(\frac{-c}{2\sqrt{d}})},
\nonumber
\end{align}
where $\text{erf}(x):=\frac{2}{\sqrt{2\pi}}\int_{0}^{x}e^{-t^{2}}dt$ is the error function 
and $\text{erfc}(x):=1-\text{erf}(x)$ is the complementary error function
and $a:=c_{1}$, $b:=c_{2}$, and
\begin{equation}
c:=\nu-\frac{\lambda_{0}+\delta_{0}C_{0}^{\gamma}}{\rho_{0}+C_{0}}c_{2},
\qquad
d:=\frac{\lambda_{0}+\delta_{0}C_{0}^{\gamma}}{\rho_{0}+C_{0}}\frac{c_{1}}{2}.
\end{equation}
\end{example}

%%%%%%%%%%%%%%%%%%%%%%%%%%%%%%%%%%%%%%%%%%%%%%%%%%%%%
\subsubsection{The \texorpdfstring{$\gamma=1$}{gamma=1} Case}

When $\gamma=1$, it follows from Theorem~\ref{MainThm} that 
the optimal $C^{\ast}(x)$ satisfies $C^{\ast}(x)=0$ in the region
where $\delta(x)\leq\frac{\lambda(x)}{\rho(x)}$ and the ``optimal'' $C^{\ast}(x)=\infty$ in the region where 
$\delta(x)>\frac{\lambda(x)}{\rho(x)}$.

\begin{remark}
If we impose a research and development budget constraint by $M\in(0,\infty)$, the maximum capacity. 
Then, the admissible set of controls is given by $\mathcal{C}_{M}:=\{C\in\mathcal{C}: \sup_{t\geq 0}C_{t}\leq M\}$. 
Then the above analysis implies that $C^{\ast}(x)=0$ in the region $\delta(x)\leq\frac{\lambda(x)}{\rho(x)}$
and $C^{\ast}(x)=M$ in the region $\delta(x)>\frac{\lambda(x)}{\rho(x)}$.
\end{remark}

Next, in the following example, we show that with particular model specifications, the optimal $C^{\ast}$
the minimized ruin probability $V(x)$ admit simpler closed-form formulas.

\begin{example}\label{StateExII}
Let $\rho(x)=\rho_{0}(c_{1}x+c_{2})$, $\lambda(x)=\left(\nu+\frac{\lambda_{0}}{1+x}\right)\rho(x)$, 
and $\delta(x)=\delta_{0}$, where $\rho_{0},c_{1},c_{2},\lambda_{0},\delta_{0}$ are positive constants.
We further assume that
$\nu<\delta_{0}<\nu+\lambda_{0}$.
Then, the optimal $C^{\ast}$ is given by:
\begin{equation}
C^{\ast}(x)=
\begin{cases}
0 &\text{if $x\leq\frac{\lambda_{0}-\delta_{0}+\nu}{\delta_{0}-\nu}$},
\\
+\infty &\text{if $x>\frac{\lambda_{0}-\delta_{0}+\nu}{\delta_{0}-\nu}$}.
\end{cases}
\end{equation}
Let us define:
\begin{equation}
x^{\ast}:=\frac{\lambda_{0}-\delta_{0}+\nu}{\delta_{0}-\nu}.
\end{equation}
Then, we can compute that for any $y\leq x^{\ast}$,
\begin{equation}
\int_{0}^{y}\frac{\lambda(w)+\delta(w)C^{\ast}(w)}{\rho(w)+C^{\ast}(w)}dw
=\int_{0}^{y}\left(\nu+\frac{\lambda_{0}}{1+w}\right)dw
=\nu y+\lambda_{0}\log(1+y),
\end{equation}
and for any $y>x^{\ast}$,
\begin{equation}
\int_{0}^{y}\frac{\lambda(w)+\delta(w)C^{\ast}(w)}{\rho(w)+C^{\ast}(w)}dw
=\nu x^{\ast}+\lambda_{0}\log(1+x^{\ast})
+\delta_{0}(y-x^{\ast}).
\end{equation}
Therefore, for $x>x^{\ast}$, we have
\begin{align}
&\int_{x}^{\infty}\frac{\lambda(y)+\delta(y)C^{\ast}(y)}{\rho(y)+C^{\ast}(y)}
e^{\nu y-\int_{0}^{y}\frac{\lambda(w)+\delta(w)C^{\ast}(w)}{\rho(w)+C^{\ast}(w)}dw}dy
\\
&=\int_{x}^{\infty}\delta_{0}e^{\nu y-\nu x^{\ast}-\lambda_{0}\log(1+x^{\ast})
-\delta_{0}(y-x^{\ast})}dy
=\frac{e^{-\nu x^{\ast}+\delta_{0}x^{\ast}}}{(1+x^{\ast})^{\lambda_{0}}}
\frac{\delta_{0}}{\delta_{0}-\nu}e^{-(\delta_{0}-\nu)x},
\nonumber
\end{align}
and for $x\leq x^{\ast}$, we have
\begin{align}
&\int_{x}^{\infty}\frac{\lambda(y)+\delta(y)C^{\ast}(y)}{\rho(y)+C^{\ast}(y)}
e^{\nu y-\int_{0}^{y}\frac{\lambda(w)+\delta(w)C^{\ast}(w)}{\rho(w)+C^{\ast}(w)}dw}dy
\\
&=\int_{x}^{x^{\ast}}\left(\nu+\frac{\lambda_{0}}{1+y}\right)e^{\nu y-\nu y-\lambda_{0}\log(1+y)}dy
+\frac{1}{(1+x^{\ast})^{\lambda_{0}}}
\frac{\delta_{0}}{\delta_{0}-\nu}
\nonumber
\\
&=\frac{\nu}{1-\lambda_{0}}\left[(1+x^{\ast})^{-\lambda_{0}+1}-(1+x)^{-\lambda_{0}+1}\right]
+(1+x)^{-\lambda_{0}}-(1+x^{\ast})^{-\lambda_{0}}+\frac{1}{(1+x^{\ast})^{\lambda_{0}}}
\frac{\delta_{0}}{\delta_{0}-\nu}.
\nonumber
\end{align}
Hence, we conclude that for $x>x^{\ast}$, we have
\begin{equation}\label{greaterxast}
V(x)=\frac{\frac{e^{-\nu x^{\ast}+\delta_{0}x^{\ast}}}{(1+x^{\ast})^{\lambda_{0}}}
\frac{\delta_{0}}{\delta_{0}-\nu}e^{-(\delta_{0}-\nu)x}}
{\frac{\nu}{1-\lambda_{0}}\left[(1+x^{\ast})^{-\lambda_{0}+1}-1\right]
+1-(1+x^{\ast})^{-\lambda_{0}}+\frac{1}{(1+x^{\ast})^{\lambda_{0}}}
\frac{\delta_{0}}{\delta_{0}-\nu}},
\end{equation}
and for $x\leq x^{\ast}$, we have
\begin{equation}\label{lessxast}
V(x)=\frac{\frac{\nu}{1-\lambda_{0}}\left[(1+x^{\ast})^{-\lambda_{0}+1}-(1+x)^{-\lambda_{0}+1}\right]
+(1+x)^{-\lambda_{0}}-(1+x^{\ast})^{-\lambda_{0}}+\frac{1}{(1+x^{\ast})^{\lambda_{0}}}
\frac{\delta_{0}}{\delta_{0}-\nu}}
{\frac{\nu}{1-\lambda_{0}}\left[(1+x^{\ast})^{-\lambda_{0}+1}-1\right]
+1-(1+x^{\ast})^{-\lambda_{0}}+\frac{1}{(1+x^{\ast})^{\lambda_{0}}}
\frac{\delta_{0}}{\delta_{0}-\nu}}.
\end{equation}
\end{example}

%%%%%%%%%%%%%%%%%%%%%%%%%%%%%%%%%%
\subsection{The State-Independent Case}\label{sec:independent}

In this section, we consider the state-independent case, that is, 
\begin{equation}\label{indep1}
\rho(\cdot)\equiv\rho,
\qquad
\lambda(\cdot)\equiv\lambda,
\end{equation}
and
\begin{equation}\label{indep2}
F(\cdot,c)\equiv F(c),
\end{equation}
where $\rho,\lambda>0$ and $F:\mathbb{R}_{+}\rightarrow\mathbb{R}_{+}$ is increasing.
Under the assumptions~\eqref{indep1}, \eqref{indep2}, 
we have the following result
which is a corollary of Theorem~\ref{MainThm}
and the ruin probability
for the state-independent dual risk model (equation~\eqref{alphaEqn}).

\begin{theorem}\label{thm:indep}
The optimal strategy $C^{\ast}$ is constant, given by
\begin{equation}
C^{\ast}=\argmin_{C\geq 0}\frac{\rho+C}{F(C)},
\end{equation}
provided that the minimum exists
and the minimized ruin probability is $V(x)=e^{-\beta x}$, 
where
\begin{equation}\label{beta:eqn:satisfies}
(\rho+C^{\ast})\beta+F(C^{\ast})\int_{0}^{\infty}[e^{-\beta y}-1]p(y)dy=0.
\end{equation}
\end{theorem}

\begin{proof}[Proof of Theorem~\ref{thm:indep}]
Under the assumptions~\eqref{indep1}, \eqref{indep2}, 
it follows from Theorem~\ref{MainThm} that the optimal strategy $C^{\ast}$ is constant, which
is given by
$C^{\ast}=\argmin_{C\geq 0}\frac{\rho+C}{F(C)}$.
With the optimal $C^{\ast}$, we have
\begin{equation}
dX_{t}=-(\rho+C^{\ast})dt+dJ_{t},
\end{equation}
where $J_{t}=\sum_{i=1}^{N_{t}}Y_{i}$
is compound Poisson, where $N_{t}$ is Poisson
with intensity $F(C^{\ast})$.

By the formula for the ruin probability
for the state-independent dual risk model, see e.g. equation~\eqref{alphaEqn}, 
we have $V(x)=e^{-\beta x}$, 
where $\beta$ satisfies the equation~\eqref{beta:eqn:satisfies}.
This completes the proof.
\end{proof}

%%%%%%%%%%%%%%%%%%%%%%%%%%%%%%%
\subsection{A State-Independent Example}\label{sec:indep:example}

In this section, we consider a state-independent example, that is, 
\begin{equation}
\rho(\cdot)\equiv\rho,
\qquad
\lambda(\cdot)\equiv\lambda,
\end{equation}
and
\begin{equation}
F(x,c)=\lambda+\delta c^{\gamma},
\qquad
\delta, \gamma>0.
\end{equation}
In this special case, by Theorem~\ref{thm:indep}, the optimal strategy $C^{\ast}$ is constant and given by
\begin{equation}
C^{\ast}=\argmin_{C\geq 0}\frac{\rho+C}{\lambda+\delta C^{\gamma}}.
\end{equation}
By following the discussions in the more general state-dependent
case in Section~\ref{sec:example}, the case $\gamma\geq 1$
is trivial and in the rest we only consider the cases $0<\gamma<1$ and
$\gamma=1$.

%%%%%%%%%%%%%%%%%%%%%%%%%
\subsubsection{The \texorpdfstring{$0<\gamma<1$}{0<gamma<1} Case}

We first consider the case that $0<\gamma<1$. 
In this case, the intensity $F(X_{t},C_{t})=\lambda+\delta C_{t}^{\gamma}$ is a concave and increasing function of $C_{t}$.
What it says is that the initial investment of research and development can boost the
prospect of future profits, but the margin decreases as the increase of the investment. 

When it is allowed to invest in research and development, we will see later,
that the condition 
\begin{equation}\label{ConditionOne}
(\rho-\lambda\mathbb{E}[Y_{1}])-(\delta\gamma)^{\frac{1}{1-\gamma}}\left(\frac{1}{\gamma}-1\right)(\mathbb{E}[Y_{1}])^{\frac{1}{1-\gamma}}<0
\end{equation}
is sufficient to guarantee that $V(x)<1$.
Note that this is weaker than the usual condition $\rho-\lambda\mathbb{E}[Y_{1}]<0$
for the dual risk model.
We have the following result.

\begin{proposition}\label{prop:indep}
Under the assumption \eqref{ConditionOne}, 
\begin{equation}
V(x)=\min_{C\in\mathcal{C}}\mathbb{P}\left(\tau^{C}<\infty|X^{C}_{0}=x\right)=e^{-\beta x},
\end{equation}
where $\beta$ is the unique positive value that satisfies the equation:
\begin{align}\label{BetaEqnI}
&\beta\left[\rho+\left(\frac{1}{\delta\gamma}\right)^{\frac{1}{\gamma-1}}
\left(\frac{\beta}{1-\int_{0}^{\infty}e^{-\beta y}p(y)dy}\right)^{\frac{1}{\gamma-1}}\right]
\\
&\qquad
-\left[\lambda+\delta\left(\frac{1}{\delta\gamma}\right)^{\frac{\gamma}{\gamma-1}}
\left(\frac{\beta}{1-\int_{0}^{\infty}e^{-\beta y}p(y)dy}\right)^{\frac{\gamma}{\gamma-1}}\right]
\left(1-\int_{0}^{\infty}e^{-\beta y}p(y)dy\right)=0,
\nonumber
\end{align}
and the optimal strategy is given by
\begin{equation}
C^{\ast}=\left(\frac{1}{\delta\gamma}\right)^{\frac{1}{\gamma-1}}
\left(\frac{\beta}{1-\int_{0}^{\infty}e^{-\beta y}p(y)dy}\right)^{\frac{1}{\gamma-1}},
\end{equation}
which also satisfies the equation:
\begin{equation}
\lambda+(1-\gamma)\delta(C^{\ast})^{\gamma}=\rho\delta\gamma(C^{\ast})^{\gamma-1}.
\end{equation}
\end{proposition}

\begin{proof}[Proof of Proposition~\ref{prop:indep}]
It follows from Theorem~\ref{thm:indep} that 
the optimal strategy is given by
\begin{equation}
C^{\ast}=\left(\frac{1}{\delta\gamma}\right)^{\frac{1}{\gamma-1}}
\left(\frac{\beta}{1-\int_{0}^{\infty}e^{-\beta y}p(y)dy}\right)^{\frac{1}{\gamma-1}},
\end{equation}
and the minimized ruin probability $V(x)$
satisfies the equation \eqref{BetaEqnI}. 

To show that \eqref{BetaEqnI} has a unique positive solution, 
it is equivalent to show that $F(\beta)=0$ has a unique positive solution 
where 
\begin{equation}
F(\beta):=\beta\left[\rho-(\delta\gamma)^{\frac{1}{1-\gamma}}\left(\frac{1}{\gamma}-1\right)[g(\beta)]^{\frac{1}{1-\gamma}}
-\lambda g(\beta)\right],
\end{equation}
and
\begin{equation}\label{small:g:eqn}
g(\beta):=\frac{1-\int_{0}^{\infty}e^{-\beta y}p(y)dy}{\beta}.
\end{equation}
It is easy to compute that for $\beta>0$,
\begin{equation}
g'(\beta)=\frac{1}{\beta^{2}}\int_{0}^{\infty}\left[\beta ye^{-\beta y}-1+e^{-\beta y}\right]p(y)dy.
\end{equation}
Let $h(x):=xe^{-x}-1+e^{-x}$, $x\geq 0$. Then $h(0)=0$
and $h(x)\rightarrow-1$ as $x\rightarrow\infty$.
Moreover, $h'(x)=-xe^{-x}<0$ for $x>0$. Thus $h(x)\leq 0$ for any $x\geq 0$ 
and therefore, $g'(\beta)\leq 0$ for any $\beta>0$ and $g(\beta)$ is a decreasing function of $\beta$.

Note that $F(\beta)=0$ for $\beta>0$ if and only if
$G(\beta)=0$ for $\beta>0$, where
\begin{equation}\label{Gbeta}
G(\beta):=\rho-(\delta\gamma)^{\frac{1}{1-\gamma}}\left(\frac{1}{\gamma}-1\right)[g(\beta)]^{\frac{1}{1-\gamma}}
-\lambda g(\beta).
\end{equation}
Note that by L'H\^{o}pital's rule,
$\lim_{\beta\rightarrow 0^{+}}g(\beta)=\mathbb{E}[Y_{1}]$.
Therefore,
\begin{equation}
\lim_{\beta\rightarrow 0^{+}}G(\beta)
=(\rho-\lambda\mathbb{E}[Y_{1}])-(\delta\gamma)^{\frac{1}{1-\gamma}}\left(\frac{1}{\gamma}-1\right)(\mathbb{E}[Y_{1}])^{\frac{1}{1-\gamma}}<0.
\end{equation}
On the other hand, $g(\beta)\rightarrow 0$ as $\beta\rightarrow\infty$; therefore
$G(\beta)\rightarrow\rho>0$ as $\beta\rightarrow\infty$.
Since $g(\beta)$ is a decreasing function in $\beta$ and $0<\gamma<1$, 
it follows that $G(\beta)$ is increasing in $\beta$. 
Hence, we conclude that $G(\beta)=0$ has a unique positive solution.
This completes the proof.
\end{proof}

In the following example, we show that 
when $Y_{i}$ are exponentially distributed, 
we are able to compute out $\beta$ and $C^{\ast}$ in simple
closed-forms. 

\begin{example}\label{ExpExample}
When $p(y)=\nu e^{-\nu y}$, $\nu>0$, $\beta$ satisfies
\begin{equation}
\beta\left[\rho+\left(\frac{1}{\delta\gamma}\right)^{\frac{1}{\gamma-1}}(\beta+\nu)^{\frac{1}{\gamma-1}}\right]
=\left[\lambda+\delta\left(\frac{1}{\delta\gamma}\right)^{\frac{\gamma}{\gamma-1}}(\beta+\nu)^{\frac{\gamma}{\gamma-1}}\right]
\frac{\beta}{\beta+\nu},
\end{equation}
which implies that
\begin{equation}
\rho(\beta+\nu)=\lambda+\left(\frac{1}{\gamma}-1\right)
\left(\frac{1}{\delta\gamma}\right)^{\frac{1}{\gamma-1}}(\beta+\nu)^{\frac{\gamma}{\gamma-1}}.
\end{equation}
In particular, when $\gamma=\frac{1}{2}$, we get
$\rho(\beta+\nu)^{2}=\lambda(\beta+\nu)+
\frac{\delta^{2}}{4}$,
which implies
$\beta=\frac{\lambda+\sqrt{\lambda^{2}+\rho\delta^{2}}}{2\rho}-\nu$,
and thus the optimal $C^{\ast}$ is given by
\begin{equation}
C^{\ast}=\frac{\delta^{2}\rho^{2}}{(\lambda+\sqrt{\lambda^{2}+\rho\delta^{2}})^{2}}.
\end{equation}
\end{example}

%%%%%%%%%%%%%%%%%%%%%%%%%%%%%%%%%%%%%%%%%%%%%%%%%%%%%%%%%%%%%%%%%%%%%%%

\begin{remark}\label{ParametersI}
We have already showed in Proposition~\ref{prop:indep} that $V(x)=e^{-\beta x}$, where $\beta$ 
is the unique positive solution to the equation \eqref{BetaEqnI}
and that it is equivalent
\begin{equation}\label{Gbeta0}
\rho-(\delta\gamma)^{\frac{1}{1-\gamma}}\left(\frac{1}{\gamma}-1\right)[g(\beta)]^{\frac{1}{1-\gamma}}
-\lambda g(\beta)=0,
\end{equation}
where $g(\beta)$ is defined in \eqref{small:g:eqn}.
Now, let us discuss how the value $\beta$ (and hence the value function $V(x)=e^{-\beta x}$)
and the optimal investment rate $C^{\ast}$ depend on the parameters
$\rho$, $\lambda$ and $\delta$.
By \eqref{Gbeta0}, we
have the following observations:

(i) As $\rho$ increases, $g(\beta)$ increases.
Since $g(\beta)$ is decreasing in $\beta$, we conclude that $\beta$
decreases as $\rho$ increases. Intuitively it says that as the fixed running
cost for research and investment increases, the ruin probability increases.
Asymptotically, as $\rho\rightarrow 0$, $g(\beta)\rightarrow 0$.
When $g(\beta)\rightarrow 0$, since $0<\gamma<1$, 
we must have $[g(\beta)]^{\frac{1}{1-\gamma}}\ll g(\beta)$.
Therefore, by \eqref{Gbeta0}, as $\rho\rightarrow 0$, we have
$g(\beta)\sim\frac{\rho}{\lambda}$. From the definition of $g(\beta)$, we have $g(\beta)\sim\frac{1}{\beta}$
as $\beta\rightarrow\infty$. Hence, we conclude that
$\beta\sim\frac{\lambda}{\rho}$,
as $\rho\rightarrow 0$.
Therefore, the optimal $C^{\ast}$ satisfies
\begin{equation}
C^{\ast}\sim(\delta\gamma)^{\frac{1}{1-\gamma}}\left(\frac{\rho}{\lambda}\right)^{\frac{1}{1-\gamma}},
\qquad\text{as $\rho\rightarrow 0$}.
\end{equation}

(ii) As $\delta$ increases, $g(\beta)$ decreases.
Since $g(\beta)$ is decreasing in $\beta$, we conclude that
$\beta$ increases as $\delta$ increases. Intuitively, it says
that if the prospect of future profits given the investment
in research and development increases, then the ruin probability decreases.
Asymptotically, as $\delta\rightarrow\infty$, we have $g(\beta)\rightarrow 0$,
and thus
$(\delta\gamma)^{\frac{1}{1-\gamma}}\left(\frac{1}{\gamma}-1\right)[g(\beta)]^{\frac{1}{1-\gamma}}
\rightarrow\rho$,
which implies that as $\delta\rightarrow\infty$, we have
$g(\beta)\sim\frac{\rho^{1-\gamma}}{\gamma\delta}\left(\frac{1}{\gamma}-1\right)^{\gamma-1}$.
Since $g(\beta)\sim\frac{1}{\beta}$ as $\beta\rightarrow\infty$, we conclude that
$\beta\sim\frac{\gamma\delta}{\rho^{1-\gamma}}\left(\frac{1}{\gamma}-1\right)^{1-\gamma}$,
as $\delta\rightarrow\infty$.
Moreover, the optimal $C^{\ast}$ satisfies:
\begin{equation}
C^{\ast}\rightarrow\frac{\rho}{\frac{1}{\gamma}-1},
\qquad
\text{as $\delta\rightarrow\infty$}.
\end{equation}

Now, if $\delta\rightarrow 0$, then $g(\beta)\rightarrow\frac{\rho}{\lambda}$.
Therefore, as $\delta\rightarrow 0$, $\beta\rightarrow\alpha$, where
we recall that $\alpha$ is the unique positive value so that
$1-\int_{0}^{\infty}e^{-\alpha y}p(y)dy=\alpha\frac{\rho}{\lambda}$,
which is the same as defined in \eqref{alphaEqn}.
Moreover, the optimal $C^{\ast}$ satisfies
\begin{equation}
C^{\ast}\sim(\delta\gamma)^{\frac{1}{1-\gamma}}\left(\frac{\rho}{\lambda}\right)^{\frac{1}{1-\gamma}},
\qquad
\text{as $\delta\rightarrow 0$}.
\end{equation}
Intuitively, it says that as $\delta\rightarrow 0$, there is no value
investing in research and development.

(iii) Similarly, as $\lambda$ increases, $\beta$ increases, and the ruin probability decreases.
As $\lambda\rightarrow\infty$, we have $g(\beta)\rightarrow 0$. Thus, $\lambda g(\beta)\rightarrow\rho$,
and $g(\beta)\sim\frac{\rho}{\lambda}$. Since $g(\beta)\sim\frac{1}{\beta}$
as $\beta\rightarrow\infty$, we conclude that
$\beta\sim\frac{\lambda}{\rho}$,
as $\lambda\rightarrow\infty$.
Moreover, the optimal $C^{\ast}$ satisfies:
\begin{equation}
C^{\ast}\sim(\delta\gamma)^{\frac{1}{1-\gamma}}\left(\frac{\rho}{\lambda}\right)^{\frac{1}{1-\gamma}},
\qquad
\text{as $\lambda\rightarrow\infty$}.
\end{equation}

(iv) Assume that the parameters are chosen so that
\begin{equation}
(\rho-\lambda\mathbb{E}[Y_{1}])-(\delta\gamma)^{\frac{1}{1-\gamma}}\left(\frac{1}{\gamma}-1\right)(\mathbb{E}[Y_{1}])^{\frac{1}{1-\gamma}}
\rightarrow 0.
\end{equation}
Then, it follows that $g(\beta)\rightarrow\mathbb{E}[Y_{1}]$ and $\beta\rightarrow 0$.
More precisely, as $\beta\rightarrow 0$, $g(\beta)\sim\mathbb{E}[Y_{1}]-\frac{\beta}{2}\mathbb{E}[Y_{1}^{2}]$
if $\mathbb{E}[Y_{1}^{2}]<\infty$, 
and \eqref{Gbeta0} becomes
\begin{equation}
\rho-(\delta\gamma)^{\frac{1}{1-\gamma}}\left(\frac{1}{\gamma}-1\right)
\left(\mathbb{E}[Y_{1}]-\frac{\beta}{2}\mathbb{E}\left[Y_{1}^{2}\right]\right)^{\frac{1}{1-\gamma}}
-\lambda\left(\mathbb{E}[Y_{1}]-\frac{\beta}{2}\mathbb{E}\left[Y_{1}^{2}\right]\right)=O(\beta^{2}),
\end{equation}
as $\beta\rightarrow 0$. Then, it follows that
\begin{align}
&\rho-(\delta\gamma)^{\frac{1}{1-\gamma}}\left(\frac{1}{\gamma}-1\right)
\left(\mathbb{E}[Y_{1}]^{\frac{1}{1-\gamma}}-\frac{1}{2(1-\gamma)}
(\mathbb{E}[Y_{1}])^{\frac{\gamma}{1-\gamma}}\mathbb{E}\left[Y_{1}^{2}\right]\beta\right)
\\
&\qquad\qquad\qquad
-\lambda\left(\mathbb{E}[Y_{1}]-\frac{\beta}{2}\mathbb{E}\left[Y_{1}^{2}\right]\right)=O(\beta^{2}),
\nonumber
\end{align}
as $\beta\rightarrow 0$.
Hence, we conclude that
\begin{equation}
\beta\sim\frac{-(\rho-\lambda\mathbb{E}[Y_{1}])+(\delta\gamma)^{\frac{1}{1-\gamma}}\left(\frac{1}{\gamma}-1\right)(\mathbb{E}[Y_{1}])^{\frac{1}{1-\gamma}}}{(\delta\gamma)^{\frac{1}{1-\gamma}}\frac{1}{2\gamma}(\mathbb{E}[Y_{1}])^{\frac{\gamma}{1-\gamma}}\mathbb{E}[Y_{1}^{2}]
+\frac{\lambda}{2}\mathbb{E}[Y_{1}^{2}]}.
\end{equation}
Moreover, the optimal $C^{\ast}$ satisfies:
\begin{equation}
C^{\ast}\sim(\delta\gamma)^{\frac{1}{1-\gamma}}(\mathbb{E}[Y_{1}])^{\frac{1}{1-\gamma}}.
\end{equation}
\end{remark}

\begin{remark}\label{ParametersII}
The value function $V(x)=e^{-\beta x}$ and
the optimal investment rate $C^{\ast}$ also depend on the parameter $\gamma$. 
We will study $\gamma=1$ case in details later. 
For the moment, let us try to understand the asymptotic behavior
of the value function and the optimal investment rate
as $\gamma\rightarrow 1^{-}$. 
We will also obtain the asymptotics as $\gamma\rightarrow 0^{+}$.
Let us recall that the optimal $C^{\ast}$ satisfies the equation:
\begin{equation}\label{Cequation}
\lambda+(1-\gamma)\delta(C^{\ast})^{\gamma}=\rho\delta\gamma(C^{\ast})^{\gamma-1}.
\end{equation}
Thus, we have $(1-\gamma)\delta(C^{\ast})^{\gamma}\leq\rho\delta\gamma(C^{\ast})^{\gamma-1}$
which implies that
$C^{\ast}\leq\frac{\rho\gamma}{1-\gamma}$.
Thus, $C^{\ast}\rightarrow 0$ as $\gamma\rightarrow 0$. 
Note that $\lim_{\gamma\rightarrow 0^{+}}\gamma^{\gamma}=1$.
Therefore, we can check that
\begin{equation}
C^{\ast}\sim\frac{\rho\delta}{\lambda+\delta}\gamma,
\qquad
\text{as $\gamma\rightarrow 0^{+}$}.
\end{equation}

Now, let us consider the $\gamma\rightarrow 1^{-}$ limit. 
Let us rewrite that equation \eqref{Cequation} as
\begin{equation}\label{Dequation}
\frac{\lambda}{(1-\gamma)^{1-\gamma}}+\delta D^{\gamma}=\frac{\rho\delta\gamma}{D^{1-\gamma}},
\end{equation}
where $D=(1-\gamma)C^{\ast}$.
Let us first consider the case $\rho\delta>\lambda$. 
Notice first that $\lim_{\gamma\rightarrow 1^{-}}(1-\gamma)^{1-\gamma}=1$.
First, $D$ cannot go to $0$ as $\gamma\rightarrow 1^{-}$, 
because otherwise the left hand side of \eqref{Dequation} goes to $\lambda$
and as $D$ goes to $0$, $D<1$ and $D^{1-\gamma}\leq 1$, so the right hand side
of \eqref{Dequation} is greater than $\rho\delta\gamma$. Then, in the limit as $\gamma\rightarrow 1^{-}$, 
we get $\lambda\geq\rho\delta$, which is a contradiction.
Second, $D$ cannot go to $\infty$ as $\gamma\rightarrow 1^{-}$. 
To see this, notice that as $D\rightarrow\infty$, the left hand side of \eqref{Dequation}
goes to $\infty$ and in the right hand side of \eqref{Dequation}, for large $D$, 
$D>1$ and $D^{1-\gamma}\geq 1$ and hence the right hand side is less than $\rho\delta$,
which is a contradiction.

Therefore, if $\rho\delta>\lambda$, $D$ converges to a positive constant,
which from \eqref{Dequation} we can see that the limit is $\frac{\rho\delta-\lambda}{\delta}$, and we have
\begin{equation}
C^{\ast}\sim\frac{\rho\delta-\lambda}{\delta}\frac{1}{1-\gamma},
\qquad
\text{as $\gamma\rightarrow 1^{-}$}.
\end{equation}
If $\rho\delta<\lambda$, then the optimal $C^{\ast}\rightarrow 0$ as $\gamma\rightarrow 1^{-}$.
To see this, notice that if $\limsup_{\gamma\rightarrow 1^{-}}C^{\ast}\in(0,\infty)$, then 
in \eqref{Cequation}, we have
$\limsup_{\gamma\rightarrow 1^{-}}\rho\delta\gamma(C^{\ast})^{\gamma-1}=\rho\delta$
and $\limsup_{\gamma\rightarrow 1^{-}}[\lambda+(1-\gamma)\delta(C^{\ast})^{\gamma}]=\lambda$,
which is a contradiction since $\rho\delta<\lambda$.
If $\limsup_{\gamma\rightarrow 1^{-}}C^{\ast}=\infty$, then
for $C^{\ast}>1$, we have from \eqref{Cequation} that 
$\lambda<\lambda+(1-\gamma)\delta(C^{\ast})^{\gamma}=\rho\delta\gamma(C^{\ast})^{\gamma-1}<\rho\delta$,
which is again a contraction. Hence, we must have $C^{\ast}\rightarrow 0$.

Since $C^{\ast}\rightarrow 0$, $(1-\gamma)\delta(C^{\ast})^{\gamma}\ll\rho\delta\gamma(C^{\ast})^{\gamma-1}$, and thus
\begin{equation}
C^{\ast}\sim\left(\frac{\lambda}{\rho\delta\gamma}\right)^{\frac{1}{\gamma-1}}
\sim
\frac{1}{e}\left(\frac{\rho\delta}{\lambda}\right)^{\frac{1}{1-\gamma}},
\qquad
\text{as $\gamma\rightarrow 1^{-}$}.
\end{equation}
If $\rho\delta=\lambda$, the optimal $C^{\ast}$ satisfies the equation:
\begin{equation}
\lambda=\frac{(1-\gamma)\delta(C^{\ast})^{\gamma}}{\gamma(C^{\ast})^{\gamma-1}-1}.
\end{equation}
Assume that $C^{\ast}>0$ is fixed, then by L'H\^{o}pital's rule,
\begin{equation}
\lim_{\gamma\rightarrow 1^{-}}
\frac{(1-\gamma)\delta(C^{\ast})^{\gamma}}{\gamma(C^{\ast})^{\gamma-1}-1}
=\lim_{\gamma\rightarrow 1^{-}}
\frac{-\delta(C^{\ast})^{\gamma}+(1-\gamma)\delta(C^{\ast})^{\gamma}\log C^{\ast}}
{(C^{\ast})^{\gamma-1}+\gamma(C^{\ast})^{\gamma-1}\log C^{\ast}}
=\frac{-\delta C^{\ast}}{1+\log C^{\ast}}.
\end{equation}
Therefore as $\gamma\rightarrow 1^{-}$, $C^{\ast}$ converges
to the unique positive solution to the equation:
$\delta x+\lambda(1+\log x)=0$.
\end{remark}

%%%%%%%%%%%%%%%%%%%%%%%%%%%%%%%%%%%%%%%%%%%%%%%%%%%%%%%%%

\subsubsection{The \texorpdfstring{$\gamma=1$}{gamma=1} Case}

When $\gamma=1$, it follows from Theorem~\ref{thm:indep} that
the optimal strategy $C^{\ast}$ is constant and it is given by
\begin{equation}
C^{\ast}=\argmin_{C\geq 0}\frac{\rho+C}{\lambda+\delta C}.
\end{equation}

When $\frac{\rho}{\lambda}<\frac{1}{\delta}$, then
$\inf_{C\geq 0}\frac{\rho+C}{\lambda+\delta C}=\frac{\rho}{\lambda}$
and the optimal strategy is $C_{t}\equiv 0$. 
In this case, the value function $V(x)=e^{-\beta x}$, where
\begin{equation}
\rho\beta+\lambda\int_{0}^{\infty}[e^{-\beta y}-1]p(y)dy=0.
\end{equation}

When $\frac{\rho}{\lambda}>\frac{1}{\delta}$, then
$\inf_{C\geq 0}\frac{\rho+C}{\lambda+\delta C}=\frac{1}{\delta}$. 
And for any $C\in\mathcal{C}$ and $\overline{C}:=\Vert C\Vert_{\infty}$, 
the strategy $\overline{C}$ is more optimal than $C$. The ``optimal strategy'' is $C_{t}\equiv\infty$.
Let us also assume that $\delta\mathbb{E}[Y_{1}]>1$.
In this case, the value function $V(x)=e^{-\beta x}$, where
\begin{equation}\label{betasat}
\beta+\delta\int_{0}^{\infty}[e^{-\beta y}-1]p(y)dy=0.
\end{equation}

When $\frac{\rho}{\lambda}=\frac{1}{\delta}$, in terms of ruin probability, it does not make
a difference whether the company decides to invest in research and development
or not.

\begin{remark}
When $\frac{\rho}{\lambda}\geq\frac{1}{\delta}$, $V(x)=e^{-\beta x}$, 
where $\beta$ satisfies \eqref{betasat} that is independent of $\rho$ and $\lambda$.
Asymptotically, when $\frac{\rho}{\lambda}\rightarrow 0$, it is easy to see that
$\beta\sim\frac{\lambda}{\rho}$.
\end{remark}

\begin{example}
In the special case that $p(y)=\nu e^{-\nu y}$,
when $\frac{\rho}{\lambda}<\frac{1}{\delta}$,
then the optimal $C\equiv 0$ and $V(x)=e^{-(\frac{\lambda}{\rho}-\nu)x}$,
and when $\frac{\rho}{\lambda}>\frac{1}{\delta}$
and $\frac{\delta}{\nu}>1$, then the optimal $C\equiv\infty$
and $V(x)=e^{-(\delta-\nu)x}$.
\end{example}

%%%%%%%%%%%%%%%%%%%%%%%%%%%%%%%%%%%%%%%%%%%%%%%%%%%%
%%%%%%%%%%%%%%%%%%%%%%%%%%%%%%%%%%%%%%%%%%%%%%%%%%%

\section{Investing in a Market Index}\label{sec:market}

We have already studied the optimal investment in research and development
for a venture capital or high tech company in the dual risk model in Section~\ref{sec:min}, and
now, let us also add the possibility of the alternative investment
in a risky asset in the market, which is a capital market index modeled
by a geometric Brownian motion.

For simplicity, we restrict our discussions to the state-independent case as in Section~\ref{sec:indep:example}:
\begin{equation}
\rho(\cdot)\equiv\rho,
\qquad
\lambda(\cdot)\equiv\lambda,
\end{equation}
where $\rho,\lambda>0$ and
\begin{equation}
F(x,c)=\lambda+\delta c^{\gamma},
\qquad
\delta, \gamma>0.
\end{equation}

Let us assume that the market index $S_{t}$ follows a geometric Brownian motion:
\begin{equation}
dS_{t}=\mu S_{t}dt+\sigma S_{t}dW_{t},
\end{equation}
where $\mu,\sigma>0$ and $W_{t}$ is a standard Brownian motion.

Assume that at time $t$, the company can invest $\theta_{t}$ shares of the market index $S_{t}$
and $C_{t}$ in research and development.
Thus, the wealth process of the company satisfies the dynamics:
\begin{equation}\label{eqn:X^C,A}
dX_{t}=-(\rho+C_{t})dt+dJ^{C}_{t}+\theta_{t}dS_{t},
\qquad
X_{0}=x>0
\end{equation}
The invested amount in the market index is $A_{t}=\theta_{t}S_{t}$ at time $t$.

We are interested to find optimal investment strategies to minimize the probability of ruin:
\begin{equation}
V(x):=\inf_{C\in\mathcal{C},A\in\mathcal{A}}\mathbb{P}(\tau<\infty|X_{0}=x),
\end{equation}
where $\mathcal{C}$ is the same as defined before and
$\mathcal{A}$ is the admissible strategies for investment in the market index, defined as:
\begin{align}
\mathcal{A}&:=
\bigg\{A:[0,\infty)\times\Omega\rightarrow\mathbb{R}:
\text{$A$ is progressively measurable}
\\
&
\qquad\qquad\qquad\qquad\qquad\qquad
\text{and for any $t>0$, $\mathbb{E}\left[\int_{0}^{t}A_{s}^{2}ds\right]<\infty$}.\bigg\}.
\nonumber
\end{align}
For any given $C\in\mathcal{C}$ and $A\in\mathcal{A}$, we write $X^{C,A}=X$
to emphasize the dependence on $C$ and $A$.

With additional investment in a market index, 
the random time change argument in the analysis
in Section~\ref{sec:min} no longer applies. 
Instead, we rely on the stochastic optimal control theory (see e.g. Fleming--Soner \cite{Fleming-Soner-book-06}),
which suggests that the Hamilton-Jacobi-Bellman equation for $V(x)$ is given by
\begin{align}\label{HTwo}
&\inf_{C\geq 0,A\in\mathbb{R}}
\bigg\{-(\rho+C)V'(x)+(\lambda+\delta C^{\gamma})\int_{0}^{\infty}[V(x+y)-V(x)]p(y)dy
\\
&\qquad\qquad\qquad
+A\mu V'(x)+\frac{1}{2}A^{2}\sigma^{2}V''(x)\bigg\}=0,
\nonumber
\end{align}
with boundary condition $V(0)=1$.

Similar as in Section~\ref{sec:min}, the case $\gamma\geq 1$ leads to triviality
and for the rest, we consider two cases: $0<\gamma<1$ and $\gamma=1$.

%%%%%%%%%%%%%%%%%%%%%%%%%%%%%%%%%%%%%%%%%%%%%%%%%%%%
\subsection{The \texorpdfstring{$0<\gamma<1$}{0<gamma<1} Case}

In this section, we consider the $0<\gamma<1$ case.
We start with the following technical lemma.

\begin{lemma}\label{tech:lemma:smaller:than:1}
$V(x)=e^{-\beta x}$ is a solution to the Hamilton-Jacobi-Bellman equation \eqref{HTwo},
where $\beta>0$ is the unique solution to the equation:
\begin{align}\label{betaEquation}
&\beta\left[\rho+\left(\frac{1}{\delta\gamma}\right)^{\frac{1}{\gamma-1}}
\left(\frac{\beta}{1-\int_{0}^{\infty}e^{-\beta y}p(y)dy}\right)^{\frac{1}{\gamma-1}}\right]
\\
&\qquad
-\left[\lambda+\delta\left(\frac{1}{\delta\gamma}\right)^{\frac{\gamma}{\gamma-1}}
\left(\frac{\beta}{1-\int_{0}^{\infty}e^{-\beta y}p(y)dy}\right)^{\frac{\gamma}{\gamma-1}}\right]
\left(1-\int_{0}^{\infty}e^{-\beta y}p(y)dy\right)
-\frac{1}{2}\frac{\mu^{2}}{\sigma^{2}}=0.
\nonumber
\end{align}
Given $V(x)=e^{-\beta x}$
and let
\begin{align}
(C^{\ast},A^{\ast})&\in{\rm argmin}\bigg\{-(\rho+C)V'(x)+(\lambda+\delta C^{\gamma})\int_{0}^{\infty}[V(x+y)-V(x)]p(y)dy\nonumber
\\
&\qquad\qquad\qquad\qquad\qquad
+A\mu V'(x)+\frac{1}{2}A^{2}\sigma^{2}V''(x)\bigg\}.
\label{defn:C^*A^*}
\end{align}
Then, we have
\begin{equation}\label{C:star:A:star}
C^{\ast}=\left(\frac{1}{\delta\gamma}\right)^{\frac{1}{\gamma-1}}
\left(\frac{\beta}{1-\int_{0}^{\infty}e^{-\beta y}p(y)dy}\right)^{\frac{1}{\gamma-1}},
\qquad
A^{\ast}=\frac{\mu}{\sigma^{2}\beta}.
\end{equation}
\end{lemma}

\begin{proof}[Proof of Lemma~\ref{tech:lemma:smaller:than:1}]
Assume that $V'(x)<0$ and $V''(x)>0$, then, the optimal $C$ and $A$
are given respectively by
\begin{align}
C=\left(\frac{1}{\delta\gamma}\right)^{\frac{1}{\gamma-1}}
\left(\frac{V'(x)}{\int_{0}^{\infty}[V(x+y)-V(x)]p(y)dy}\right)^{\frac{1}{\gamma-1}},
\qquad
A=-\frac{\mu V'(x)}{\sigma^{2}V''(x)},
\end{align}
and the Hamilton-Jacobi-Bellman equation becomes
\begin{align}
&-\left[\rho+\left(\frac{1}{\delta\gamma}\right)^{\frac{1}{\gamma-1}}
\left(\frac{V'(x)}{\int_{0}^{\infty}[V(x+y)-V(x)]p(y)dy}\right)^{\frac{1}{\gamma-1}}\right]
V'(x)
\\
&\qquad
+\left[\lambda+\delta\left(\frac{1}{\delta\gamma}\right)^{\frac{\gamma}{\gamma-1}}
\left(\frac{V'(x)}{\int_{0}^{\infty}[V(x+y)-V(x)]p(y)dy}\right)^{\frac{\gamma}{\gamma-1}}\right]
\nonumber
\\
&\qquad\qquad\qquad\qquad\cdot
\int_{0}^{\infty}[V(x+y)-V(x)]p(y)dy
-\frac{1}{2}\frac{\mu^{2}}{\sigma^{2}}\frac{(V'(x))^{2}}{V''(x)}=0.
\nonumber
\end{align}
We can see that $V(x)=e^{-\beta x}$, where $\beta>0$ is the unique solution to the equation:
\begin{align}
&\beta\left[\rho+\left(\frac{1}{\delta\gamma}\right)^{\frac{1}{\gamma-1}}
\left(\frac{\beta}{1-\int_{0}^{\infty}e^{-\beta y}p(y)dy}\right)^{\frac{1}{\gamma-1}}\right]
\\
&\qquad
-\left[\lambda+\delta\left(\frac{1}{\delta\gamma}\right)^{\frac{\gamma}{\gamma-1}}
\left(\frac{\beta}{1-\int_{0}^{\infty}e^{-\beta y}p(y)dy}\right)^{\frac{\gamma}{\gamma-1}}\right]
\left(1-\int_{0}^{\infty}e^{-\beta y}p(y)dy\right)
-\frac{1}{2}\frac{\mu^{2}}{\sigma^{2}}=0.
\nonumber
\end{align}
Recall the definition $g(\beta)=\frac{1}{\beta}\left[1-\int_{0}^{\infty}e^{-\beta y}p(y)dy\right]$
and we want to show that the equation
\begin{equation}
H(\beta):=\rho-(\delta\gamma)^{\frac{1}{1-\gamma}}\left(\frac{1}{\gamma-1}\right)[g(\beta)]^{\frac{1}{1-\gamma}}
-\lambda g(\beta)-\frac{1}{2}\frac{\mu^{2}}{\sigma^{2}}\frac{1}{\beta}=0
\end{equation}
has a unique positive solution.
It is easy to see that $\lim_{\beta\rightarrow 0^{+}}g(\beta)=\mathbb{E}[Y_{1}]$ and 
$\lim_{\beta\rightarrow\infty}g(\beta)=0$. 
Thus, $H(\beta)\sim-\frac{1}{2}\frac{\mu^{2}}{\sigma^{2}\beta}<0$
as $\beta\rightarrow 0^{+}$
and $H(\beta)\rightarrow\rho$ as $\beta\rightarrow\infty$.
We have already proved that $g(\beta)$ is decreasing in $\beta$.
Moreover, $\frac{1}{\beta}$ is also decreasing in $\beta$. 
Therefore $H(\beta)$ is increasing in $\beta$ and hence
there exists a unique positive value $\beta$ so that $H(\beta)=0$.

Finally, we can compute that the optimal $C^{\ast}$ and $A^{\ast}$
are given by \eqref{C:star:A:star}.
This completes the proof.
\end{proof}
%%%%%%%%%%%%%%%%%%%%%%%%%%%%%%%%%%%%%%%%%%%%%%%%%%%%
\subsubsection{A Verification Theorem}

Let us recall from \eqref{HTwo} that the Hamilton-Jacobi-Bellman equation is given by
\begin{equation}\label{eqn:(5.2)}
\begin{split}
0&=\mathop{\inf}\limits_{C>0,A\in\mathbb{R}}\bigg\{-(\rho+C)V'(x)+(\lambda+\delta C^{\gamma})\int_{0}^{\infty}[V(x+y)-V(x)]p(y)dy\\
&\hspace{7cm}+A\mu V'(x)+\frac12A^2\sigma^2 V''(x)
\bigg\},
\end{split}
\end{equation}
with boundary condition $V(0)=1$. 

\begin{theorem}[Vertification]\label{eqn:investment_verification_gamma<1}
If $w\in{\rm C}^2_{\rm b}$ is a solution of \eqref{eqn:(5.2)} with  $w(0)=1$, such that for any $C\in\mathcal{C}$ and $A\in\mathcal{A}$
\begin{equation}\label{eqn:w(X)1{tau<infty}=0}
\lim_{K\rightarrow\infty}w(K)=0,
\end{equation}
then, $w\le V$. In addition, if 
\[
C^*(x):=\left(\frac{1}{\delta\gamma}\right)^{\frac{1}{\gamma-1}}
\left(\frac{w'(x)}{\int_{0}^{\infty}[w(x+y)-w(x)]p(y)dy}\right)^{\frac{1}{\gamma-1}}\;\text{ \rm and }\;A^*(x)=-\frac{\mu w'(x)}{\sigma^{2}w''(x)},
\]
are such that 
\[
dX^*_t=-(\rho+C^*(X^*_t))dt+dJ_t^{C^*(X^*_{t-})}+A^*(X^*_t)dS_t
\]
has a solution 
and $C^*_\cdot:=C^*(X^*_\cdot)\in\mathcal{C}$ and  $A^*_\cdot:=A^*(X^*_\cdot)\in\mathcal{A}$, then $w=V$.
\end{theorem}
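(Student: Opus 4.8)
The plan is to run the same argument as in the proof of Theorem~\ref{VeriThm}, now using It\^o's formula for jump--diffusion processes. Fix $C\in\mathcal{C}$ and $A\in\mathcal{A}$; since $w\in{\rm C}^2_{\rm b}$, applying It\^o's formula to $w(X^{C,A}_{t})$ along $[0,t\wedge\tau]$ expresses $w(X^{C,A}_{t\wedge\tau})-w(x)$ as the time integral over $[0,t\wedge\tau]$ of the integrand appearing inside the infimum in \eqref{eqn:(5.2)}, with $C$ replaced by $C_{s}$, $A$ by $A_{s}$ and $V$ by $w$, plus a Brownian stochastic integral $M^{1}_{\cdot}=\int_{0}^{\cdot\wedge\tau}\sigma A_{s}w'(X^{C,A}_{s})\,dW_{s}$ and a compensated-jump integral $M^{2}$. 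By \eqref{eqn:(5.2)} that integrand is $\geq 0$ for every admissible pair, so once $M^{1}$ and $M^{2}$ are known to have zero expectation we obtain $\mathbb{E}[w(X^{C,A}_{t\wedge\tau})]\geq w(x)$.

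The step I expect to require the most care is the martingale property of the two stochastic integrals. Since $C$ is bounded, the intensity $\lambda+\delta C_{s}^{\gamma}$ is bounded, so the number of jumps before time $t$ has finite mean, and as $w$ is bounded this makes $M^{2}$ a genuine martingale. For $M^{1}$, however, $A\in\mathcal{A}$ only guarantees $\int_{0}^{t}A_{s}^{2}\,ds<\infty$ almost surely and not in expectation, so $M^{1}$ is a priori only a local martingale. I would localize with $\tau_{n}:=\inf\{t:\int_{0}^{t}A_{s}^{2}\,ds\geq n\}\wedge n$, apply the identity at $t\wedge\tau\wedge\tau_{n}$ to get $\mathbb{E}[w(X^{C,A}_{t\wedge\tau\wedge\tau_{n}})]\geq w(x)$, and then let $n\to\infty$, using $\tau_{n}\uparrow\infty$ a.s., the boundedness of $w$, and dominated convergence to recover $\mathbb{E}[w(X^{C,A}_{t\wedge\tau})]\geq w(x)$.

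Next I would let $t\to\infty$. Because the diffusive part of $X^{C,A}$ is continuous and the jumps are upward, ruin occurs through a continuous down-crossing of $0$, so $X^{C,A}_{\tau}=0$ and $w(X^{C,A}_{\tau})=w(0)=1$; hence
\[
w(x)\leq\mathbb{E}[w(X^{C,A}_{t\wedge\tau})]=\mathbb{E}\big[w(X^{C,A}_{t})1_{\{t<\tau\}}\big]+\mathbb{P}(\tau\leq t).
\]
By hypothesis \eqref{eqn:w(X)1{tau<infty}=0} and dominated convergence the first term tends to $0$, while $\mathbb{P}(\tau\leq t)\to\mathbb{P}(\tau<\infty|X_{0}=x)$, so $w(x)\leq\mathbb{P}(\tau<\infty|X_{0}=x)$; taking the infimum over $C\in\mathcal{C}$ and $A\in\mathcal{A}$ yields $w\leq V$.

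For the reverse inequality I would note that, since $w'<0$ and $w''>0$ (which is also precisely what makes the formulas for $C^{\ast}(x)$ and $A^{\ast}(x)$ in the statement well-defined), the pair $(C^{\ast}(x),A^{\ast}(x))$ attains the minimum of the bracket in \eqref{eqn:(5.2)}. Under the standing hypotheses the feedback equation for $X^{\ast}$ has a solution with $C^{\ast}_{\cdot}\in\mathcal{C}$ and $A^{\ast}_{\cdot}\in\mathcal{A}$, so the relevant integrand vanishes identically along $X^{\ast}$ and every inequality in the argument above becomes an equality; applying \eqref{eqn:w(X)1{tau<infty}=0} with $C=C^{\ast}_{\cdot}$ and $A=A^{\ast}_{\cdot}$ in the same limit gives $w(x)=\mathbb{P}(\tau<\infty|X_{0}^{\ast}=x)\geq V(x)$. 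Together with $w\leq V$ this proves $w=V$.
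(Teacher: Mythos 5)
Your proposal is correct and follows essentially the same route as the paper, which simply repeats the It\^o-formula argument of Theorem \ref{VeriThm} (nonnegativity of the HJB integrand, expectation, $t\to\infty$ via \eqref{eqn:w(X)1{tau<infty}=0} and dominated convergence, then equality along $(C^{\ast},A^{\ast})$ since it attains the argmin). Your explicit localization of the Brownian integral to handle the merely a.s.\ square-integrable $A\in\mathcal{A}$ is a detail the paper leaves implicit, but it does not change the argument's structure.
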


\begin{proof}[Proof of Theorem~\ref{eqn:investment_verification_gamma<1}]
We follow the supermartingale argument presented in \cite[Theorem~1.1]{Rogers}.
Since $w$ is bounded and continuously differentiable with bounded derivative, by It\^o lemma for jump processes we have
\begin{align}
\mathbb{E}\left[w\left(X^{C,A}_{t}\right)\Big|\mathcal{F}_{s}\right]&=w\left(X_{s}^{C,A}\right)+\mathbb{E}\bigg[\int_s^{t}\bigg(-(\rho+C_u)w'\left(X^{C,A}_u\right)
\\
&\qquad\qquad
+(\lambda+\delta C_u^\gamma)\int_0^\infty\left[w\left(X^{C,A}_u+y\right)-w\left(X^{C,A}_u\right)\right]p(y)dy
\nonumber
\\
&\qquad\qquad
+A_{u}\mu w'\left(X^{C,A}_u\right)+\frac{1}{2}A_{u}^{2}\sigma^{2}w''\left(X^{C,A}_u\right)\bigg)du\bigg|\mathcal{F}_s\bigg]\ge w\left(X_{s}^{C,A}\right),
\nonumber
%\\
%&\ge w(X_{s}^{C,A}),
%\nonumber
\end{align}
for any $C\in\mathcal{C}$ and $A\in\mathcal{A}$. Therefore, 
$w(X_{t}^{C,A})$ is a submartingale. 
Let $\tau_{K}$ be the first time that the $X_{t}^{C,A}$ process hits $K>0$.
Since $w$ is uniformly bounded, by optional stopping theorem, 
\[
w(x)\le \mathbb{E}\left[w\left(X^{C,A}_{\tau_{K}\wedge \tau}\right)\right]
=\mathbb{E}\left[w\left(X^{C,A}_{\tau_{K}}\right)1_{\{\tau_{K}<\tau\}}+1_{\{\tau_{K}\ge\tau\}}\right]
=w(K)\mathbb{P}(\tau_{K}<\tau)+\mathbb{P}(\tau<\tau_{K}).
\]
It follows from \eqref{eqn:w(X)1{tau<infty}=0} and 
monotone convergence theorem that the right hand side above converges to $\mathbb{P}(\tau<\infty)$ 
as $K\rightarrow\infty$ and thus
\[
w(x)\le \mathbb{P}(\tau<\infty).
\]
By taking infimum over $C\in\mathcal{C}$ and $A\in\mathcal{A}$, we obtain $w\le V$. All the above inequalities change to equality for $C_{t}=C^{\ast}(X_{t-}^{\ast})$ and $A_{t}=A^{\ast}(X_{t-}^{\ast})$. This completes the proof.
%
%Since $w$ is bounded and continuously differentiable with bounded derivative, by It\^o lemma for jump processes we have
%\begin{align}
%\mathbb{E}\left[w\left(X^{C,A}_{t\wedge \tau}\right)\right]&=w(x)+\mathbb{E}\bigg[\int_0^{t\wedge \tau}\bigg(-(\rho+C_s)w'\left(X^{C,A}_s\right)
%\\
%&\qquad\qquad
%+(\lambda+\delta C_s^\gamma)\int_0^\infty\left[w\left(X^{C,A}_s+y\right)-w\left(X^{C,A}_s\right)\right]p(y)dy
%\nonumber
%\\
%&\qquad\qquad\qquad
%+A_{s}\mu w'\left(X^{C,A}_s\right)+\frac{1}{2}A_{s}^{2}\sigma^{2}w''\left(X^{C,A}_s\right)\bigg)ds\bigg]
%\nonumber
%\\
%&\ge w(x),
%\nonumber
%\end{align}
%for any $C\in\mathcal{C}$ and $A\in\mathcal{A}$. Therefore, 
%\[
%w(x)\le \mathbb{E}\left[w\left(X^{C,A}_{t\wedge \tau}\right)\right]
%=\mathbb{E}\left[w\left(X^{C,A}_{t}\right)1_{\{t<\tau\}}+1_{\{t\ge\tau\}}\right].
%\]
%If follows from \eqref{eqn:w(X)1{tau<infty}=0} and 
%Lebesgue dominated convergence theorem that the right hand side above converges to $\mathbb{P}(\tau<\infty)$ and thus
%\[
%w(x)\le \mathbb{P}(\tau<\infty).
%\]
%By taking infimum over $C\in\mathcal{C}$ and $A\in\mathcal{A}$, we obtain $w\le V$. To obtain the equality, notice that in the above argument we have equality
%when $C_{t}=C^{\ast}(X_{t-}^{\ast})$ and $A_{t}=A^{\ast}(X_{t-}^{\ast})$.
%The proof is complete.
\end{proof}

\begin{corollary}\label{cor:w=V_gamma<1}
$w(x)=e^{-\beta x}$ with $\beta$ defined in \eqref{betaEquation} satisfies \eqref{eqn:w(X)1{tau<infty}=0} and thus $w=V$.
\end{corollary}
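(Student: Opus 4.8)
The plan is to verify the hypothesis \eqref{eqn:w(X)1{tau<infty}=0} of Theorem \ref{eqn:investment_verification_gamma<1} for the candidate $w(x)=e^{-\beta x}$, with $\beta$ the unique positive root of \eqref{betaEquation}, and then invoke that verification theorem. Since $w\in{\rm C}^2_{\rm b}(\mathbb{R}_{\geq 0})$ and, by the preceding lemma, solves \eqref{eqn:(5.2)} with $w(0)=1$, the only thing left to check is that $\lim_{t\to\infty}e^{-\beta X_t^{C,A}}1_{\{t\le\tau\}}=0$ almost surely for every admissible pair $C\in\mathcal{C}$, $A\in\mathcal{A}$. Because $\beta>0$, this limit is $0$ unless $\liminf_{t\to\infty}X_t^{C,A}<\infty$ while $\tau=\infty$; hence it suffices to show $\mathbb{P}(\liminf_{t\to\infty}X_t^{C,A}<\infty,\ \tau=\infty)=0$.

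First I would decompose the event $\{\liminf_{t\to\infty}X_t^{C,A}<\infty\}=\bigcup_{L\in\mathbb{N}}\{\liminf_{t\to\infty}X_t^{C,A}<L\}$ and argue for each fixed $L$, mimicking the proof of the corollary in the $0<\gamma<1$ research-and-development-only case. On the event $\{\liminf_{t\to\infty}X_t^{C,A}<L\}$ there is a (random) sequence of times $t_n\to\infty$ with $X_{t_n}^{C,A}\le L$. The key probabilistic input is that, conditioned on the wealth being in $[0,L]$ at some time $s$, there is a positive probability (bounded below uniformly in $s$ by a constant depending only on $L$, $\rho$, $\|C\|_\infty$, $\sigma$, $\mu$) that over the next deterministic time window $[s,s+h_0]$ no profit arrives, and the Brownian component of the wealth stays small enough that the continuous drift $-(\rho+C_t)\,dt+A_t\,dS_t$ drives $X^{C,A}$ below $0$ before any jump. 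Indeed, with no jumps, $dX_t^{C,A}=-(\rho+C_t)\,dt+A_t(\mu\,dt+\sigma\,dW_t)$, and since $\rho+C_t\ge\rho>0$ and $A\in\mathcal{A}$ (so $\int_0^{t}A_s^2\,ds<\infty$ a.s.), a Girsanov/explicit-Gaussian estimate shows the probability of $X^{C,A}$ hitting $0$ within a bounded horizon starting from any level $\le L$ is positive.

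Having such a lower bound on the ruin-within-$h_0$ probability, I would apply the conditional Borel–Cantelli lemma (or the Lévy extension thereof) along the successive visits to $[0,L]$: either the wealth stays eventually above $L$, or it returns to $[0,L]$ infinitely often, and in the latter case ruin occurs with probability one. Therefore $\mathbb{P}(\liminf_{t\to\infty}X_t^{C,A}<L,\ \tau=\infty)=0$ for each $L$, and taking the union over $L\in\mathbb{N}$ gives \eqref{eqn:w(X)1{tau<infty}=0}. By Theorem \ref{eqn:investment_verification_gamma<1} we conclude $w\le V$; combined with the optimality of $(C^\ast,A^\ast)$ from the lemma and the fact that under $(C^\ast(X^\ast_\cdot),A^\ast(X^\ast_\cdot))$ the same argument applies to $X^\ast$, we obtain $w=V$.

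The main obstacle will be making the ``no-jump plus small-Brownian-fluctuation forces ruin'' estimate uniform enough to feed the Borel–Cantelli step: unlike the pure-jump case, here the risky-asset position $A$ is essentially unconstrained (only $\int_0^t A_s^2\,ds<\infty$ a.s.), so over a fixed window the Gaussian term $\int_s^{s+h_0}\sigma A_u\,dW_u$ can have arbitrarily large variance, and one must argue that this still leaves a positive conditional probability of the path dipping below $0$ — which it does, since a Brownian motion with any positive variance has positive probability of any given displacement, while the negative drift $-\rho h_0$ only helps. Care is also needed because the relevant lower bound may depend on the realized path of $A$; the clean way around this is to condition on $\mathcal{F}_s$, note the conditional law of the continuous part over $[s,s+h_0]$ is Gaussian with mean $\le L-\rho h_0+\mu\int_s^{s+h_0}A_u\,du$ and some positive variance, and observe that in all cases $\mathbb{P}(\text{ruin in }[s,s+h_0]\mid\mathcal{F}_s)\ge c(L)>0$ on $\{X_s^{C,A}\in[0,L]\}$, with $c(L)$ not depending on $s$ — the finiteness of $\int_0^t A_s^2\,ds$ being exactly what guarantees the window is a.s. well-behaved.
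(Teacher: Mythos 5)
Your overall strategy is the same as the paper's: the paper proves this corollary by asserting that \eqref{eqn:w(X)1{tau<infty}=0} ``follows from the same lines of argument as in Theorem \ref{VeriThm}'' (i.e.\ the decomposition over $L\in\mathbb{N}$ of the event $\{\liminf_{t\to\infty}X_t<L,\ \tau=\infty\}$ and a positive-probability ruin-within-a-window estimate at each visit to $[0,L]$), and then concludes $w=V$ from Theorem \ref{eqn:investment_verification_gamma<1} because the optimizers $C^{\ast}$ and $A^{\ast}$ are constants, hence admissible. Your proposal reproduces exactly this scheme, and in fact goes further than the paper by explicitly confronting the new difficulty created by the market-investment term.

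However, the step where you go beyond the paper is also where your justification does not yet hold up. For an adapted, path-dependent $A$ the conditional law of $\int_s^{s+h_0}A_u(\mu\,du+\sigma\,dW_u)$ given $\mathcal{F}_s$ is \emph{not} Gaussian, and the assertion that ``in all cases'' the one-window ruin probability is bounded below by some $c(L)>0$ independent of the strategy is precisely the crux, not a routine observation: a controller starting from a moderate level $x\le L$ can use the risky asset to drive the single-window ruin probability down, so uniformity over $(C,A)$ must be argued, otherwise the conditional Borel--Cantelli sum may converge. A clean way to close this is: fix $h_0=2L/\rho$, condition on the no-jump event in $[s,s+h_0]$ (probability at least $e^{-(\lambda+\delta\Vert C\Vert_\infty^{\gamma})h_0}$ since $C$ is bounded); on survival of the window the gains process $G_u=\int_s^{s+u}A_v(\mu\,dv+\sigma\,dW_v)$ must satisfy $G_u>\rho u-x\ge -L$ throughout and $G_{h_0}\ge \rho h_0-x\ge L$ at the end; under the measure obtained by removing the drift $\mu$ (Girsanov with kernel $-\mu/\sigma$ on the fixed window) the stopped gains process is a local martingale bounded below by $-L$, hence a supermartingale, so the risk-neutral probability of surviving the window is at most $1/2$, and since the Girsanov density over a fixed window has moments this converts into a strategy-uniform lower bound on the $\mathbb{P}$-probability of ruin within the window. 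With that bound in hand, your Borel--Cantelli step and the final identification $w=V$ via Theorem \ref{eqn:investment_verification_gamma<1} (with the constant $C^{\ast},A^{\ast}$, exactly as in the paper) go through.
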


\begin{proof}[Proof of Corollary~\ref{cor:w=V_gamma<1}]
We already showed, in Lemma~\ref{tech:lemma:smaller:than:1}, that $w$ is a classical solution of the boundary value problem \eqref{eqn:(5.2)}. 
Moreover, since  $C^*$ and $A^*$ defined by \eqref{defn:C^*A^*}
are admissible controls (constants). By Theorem~\ref{eqn:investment_verification_gamma<1} and because \eqref{eqn:w(X)1{tau<infty}=0} trivially holds, we have $V(x)=w(x)=e^{-\beta x}$.
The proof is complete.
\end{proof}

%%%%%%%%%%%%%%%%%%%%%%%%%%%%%%%%%%%%%%%%%%%%%%%%%
Next, we provide some asymptotic analysis.

\begin{remark}
As in Remark \ref{ParametersI}, let us
discuss the dependence of $C^{\ast}$, $\beta$ and hence $V(x)=e^{-\beta x}$
on the parameters $\rho$, $\lambda$ and $\delta$.
Since the results are similar to Remark \ref{ParametersI}, we omit
the details and only summarize the results here.
Note that $\beta$ satisfies
\begin{equation}
\rho-(\delta\gamma)^{\frac{1}{1-\gamma}}\left(\frac{1}{\gamma}-1\right)[g(\beta)]^{\frac{1}{1-\gamma}}
-\lambda g(\beta)-\frac{1}{2}\frac{\mu^{2}}{\sigma^{2}}\frac{1}{\beta}=0,
\end{equation}
where $g(\beta)$ is defined in \eqref{small:g:eqn}.

(i) As $\rho\rightarrow 0^{+}$, we have
$\beta\sim\frac{\lambda+\frac{1}{2}\frac{\mu^{2}}{\sigma^{2}}}{\rho}$,
and
$C^{\ast}\sim(\delta\gamma)^{\frac{1}{1-\gamma}}\left(\frac{\rho}{\lambda+\frac{1}{2}\frac{\mu^{2}}{\sigma^{2}}}\right)^{\frac{1}{1-\gamma}}$.

(ii) As $\delta\rightarrow\infty$, we have
$\beta\sim\frac{\gamma}{\rho^{1-\gamma}}\left(\frac{1}{\gamma}-1\right)^{1-\gamma}\delta$,
and
$C^{\ast}\rightarrow\frac{\rho}{\frac{1}{\gamma}-1}$.
As $\delta\rightarrow 0$, we have $\beta\rightarrow\alpha$, where
$\alpha$ is the unique positive value so that
\begin{equation}
\rho\alpha+\lambda\int_{0}^{\infty}[e^{-\alpha y}-1]p(y)dy-\frac{1}{2}\frac{\mu^{2}}{\sigma^{2}}=0.
\end{equation}
Moreover, as $\delta\rightarrow 0$, we have
$C^{\ast}\sim(\delta\gamma)^{\frac{1}{1-\gamma}}\left(\frac{1}{\lambda}\left(\rho-\frac{1}{2\alpha}\frac{\mu^{2}}{\sigma^{2}}\right)\right)^{\frac{1}{1-\gamma}}$.

(iii) As $\lambda\rightarrow\infty$, we have
$\beta\sim\frac{\lambda}{\rho}$,
and
$C^{\ast}\sim(\delta\gamma)^{\frac{1}{1-\gamma}}\left(\frac{\rho}{\lambda}\right)^{\frac{1}{1-\gamma}}$.
\end{remark}

\begin{remark}
Here, we investigate the asymptotic behavior
of the value function and the optimal investment rate
as $\gamma\rightarrow 1^{-}$ and $\gamma\rightarrow 0^{+}$. 
Note that the optimal $C^{\ast}$ and $\beta$ satisfy:
\begin{equation}
\rho-\left(\frac{1}{\gamma}-1\right)C^{\ast}-\frac{\lambda}{\delta\gamma}(C^{\ast})^{1-\gamma}-\frac{1}{2}\frac{\mu^{2}}{\sigma^{2}}\frac{1}{\beta}=0,
\end{equation}
and
\begin{equation}
C^{\ast}=\left(\frac{1}{\delta\gamma}\right)^{\frac{1}{\gamma-1}}
\left(\frac{\beta}{1-\int_{0}^{\infty}e^{-\beta y}p(y)dy}\right)^{\frac{1}{\gamma-1}}.
\end{equation}

(i) As $\gamma\rightarrow 0^{+}$, $C^{\ast}\sim\eta\gamma$ for some $\eta>0$
and $\beta\rightarrow\iota$ for some $\iota>0$.
It is easy to check that $\eta,\iota>0$ satisfy:
$\eta=\frac{1-\int_{0}^{\infty}e^{-\iota y}p(y)dy}{\iota}$
and $\rho-\eta-\frac{\lambda}{\delta}\eta-\frac{1}{2}\frac{\mu^{2}}{\sigma^{2}}\frac{1}{\iota}=0$. Thus
\begin{equation}
\rho-\left(1+\frac{\lambda}{\delta}\right)\frac{1-\int_{0}^{\infty}e^{-\iota y}p(y)dy}{\iota}
-\frac{1}{2}\frac{\mu^{2}}{\sigma^{2}}\frac{1}{\iota}=0.
\end{equation}

(ii) Next, let us consider $\gamma\rightarrow 1^{-}$.

If $\delta\mathbb{E}[Y_{1}]>1$, then there exists a unique
value $\iota>0$ such that 
$\delta=\frac{\iota}{1-\int_{0}^{\infty}e^{-\iota y}p(y)dy}$.
Assume further that 
$\rho-\frac{\lambda}{\delta}-\frac{1}{2}\frac{\mu^{2}}{\sigma^{2}\iota}>0$.
Then, we have $C^{\ast}\sim\frac{\eta}{1-\gamma}$ and $\beta\rightarrow\iota$ as $\gamma\rightarrow 1^{-}$,
where 
$\eta=\rho-\frac{\lambda}{\delta}-\frac{1}{2}\frac{\mu^{2}}{\sigma^{2}\iota}$.

If $\rho-\frac{\lambda}{\delta}-\frac{1}{2}\frac{\mu^{2}}{\sigma^{2}\iota}<0$, 
the optimal $C^{\ast}\rightarrow 0$ as $\gamma\rightarrow 1^{-}$
and $C^{\ast}\sim\left(\frac{\delta\gamma}{\lambda}\left(\rho-\frac{1}{2}\frac{\mu^{2}}{\sigma^{2}}\frac{1}{\beta}\right)\right)^{\frac{1}{1-\gamma}}$
and $\beta\rightarrow\iota$
as $\gamma\rightarrow 1^{-}$.
We can check that $\eta,\iota$ satisfy the equations:
$\eta=\frac{\lambda}{\delta\left(\rho-\frac{1}{2}\frac{\mu^{2}}{\sigma^{2}\iota}\right)}$
and
$\frac{\iota}{\delta}=1-\int_{0}^{\infty}e^{-\iota y}p(y)dy$.
As $\gamma\rightarrow 1^{-}$, we have
$C^{\ast}\sim\frac{1}{e}\left(\frac{\delta}{\lambda}\left(\rho-\frac{1}{2}\frac{\mu^{2}}{\sigma^{2}\iota}\right)\right)^{\frac{1}{1-\gamma}}$.

If $\rho-\frac{\lambda}{\delta}-\frac{1}{2}\frac{\mu^{2}}{\sigma^{2}\iota}=0$, 
then, as $\gamma\rightarrow 1^{-}$, we have that $C^{\ast}$ converges
to the unique positive solution to the equation:
$\delta x+\lambda(1+\log x)=0$.
\end{remark}

%%%%%%%%%%%%%%%%%%%%%%%%%%%%%%%%%%%%%%%%%%%%%%%%%%%%%%%%%%%%%%%%%%%%%%%%%%%%%%%

\subsection{The \texorpdfstring{$\gamma=1$}{gamma=1} Case}

Consider the case where $\gamma=1$, i.e. for $x>0$. 
Then we have a singular control problem on $C\in\mathcal{C}$ (see e.g. Fleming--Soner \cite{Fleming-Soner-book-06}) 
and the value function $V(x)$
satisfies the Hamilton-Jacobi-Bellman equation:
\begin{equation}
\begin{split}
0&=\min\Bigg\{-\rho V'(x)+\lambda\int_{0}^{\infty}[V(x+y)-V(x)]p(y)dy
+\inf_{A\in\mathbb{R}}\left\{A\mu V'(x)+\frac{1}{2}A^{2}\sigma^{2}V''(x)\right\},\\
&\hspace{5cm}\delta\int_{0}^{\infty}[V(x+y)-V(x)]p(y)dy-V'(x)\Bigg\}\,,
\end{split}
\end{equation}
with boundary condition $V(0)=1$.
Optimizing over $A$, it reduces to the following equation:
\begin{equation}\label{eqn:(5.12)}
\begin{split}
0&=\min\Bigg\{-\rho V'(x)+\lambda\int_{0}^{\infty}[V(x+y)-V(x)]p(y)dy-\frac{\mu^2(V')^2}{2\sigma^2V''},\\
&\hspace{5cm}\delta\int_{0}^{\infty}[V(x+y)-V(x)]p(y)dy-V'(x)\Bigg\}\,,
\end{split}
\end{equation}
with boundary condition $V(0)=1$.

For $w\in{\rm C}^2_{\rm b}$, we define
\[
\mathcal{P}:=\bigg\{x\in\mathbb{R}_+\;:\; \delta\int_{0}^{\infty}[w(x+y)-w(x)]p(y)dy-w'(x)>0\bigg\}.
\]
According to Fleming--Soner \cite[Chapter 8]{Fleming-Soner-book-06}, $w$ is a classical solution of \eqref{eqn:(5.12)}  if 

(i) On $\mathcal{P}$, $w$ satisfies
\[
0=-\rho w'(x)+\lambda\int_{0}^{\infty}[w(x+y)-w(x)]p(y)dy-\frac{\mu^2(w')^2}{2\sigma^2w''}.
\] 

(ii)  On $\mathbb{R}_+$, $w$ satisfies
\begin{equation}\label{eqn:variational_inequality}
\begin{split}
0&\le-\rho w'(x)+\lambda\int_{0}^{\infty}[w(x+y)-w(x)]p(y)dy-\frac{\mu^2(w')^2}{2\sigma^2w''},\\
0&\le \delta\int_{0}^{\infty}[w(x+y)-w(x)]p(y)dy-w'(x).
\end{split}
\end{equation}

(iii) $w(0)=1$.

\begin{lemma}\label{lem:classical_solution_investment}
$w(x)=e^{-(\beta_1\vee\beta_2)x}$ is a classical solution of \eqref{eqn:(5.12)} where $\beta_1$ is the unique positive solutions of $F(\beta)=0$ and $\beta_2$ is the unique positive solution of $G(\beta)=0$ if it exists or zero otherwise. Here $F$ and $G$ are given by
\[
\begin{split}
F(\beta)&:=\rho\beta+\lambda\int_0^\infty[e^{-\beta y}-1]p(y)dy-\frac12\frac{\mu^2}{\sigma^2},\\
G(\beta)&:=\beta+\delta\int_0^\infty[e^{-\beta y}-1]p(y)dy.
 \end{split}
\]
\end{lemma}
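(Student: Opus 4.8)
The plan is to first pin down the roots $\beta_{1}$ and $\beta_{2}$ together with the sign behaviour of $F$ and $G$, and then substitute $w(x)=e^{-(\beta_{1}\vee\beta_{2})x}$ into the variational characterization (i)--(iii) of \eqref{eqn:(5.12)} and check that everything collapses to elementary scalar statements about these two functions.

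\emph{Step 1: the roots.} I would note that both $F$ and $G$ are smooth on $(0,\infty)$; differentiating under the integral sign (legitimate because $ye^{-\beta y}$ and $y^{2}e^{-\beta y}$ are bounded on $[0,\infty)$ for each fixed $\beta>0$, so no moment assumption beyond $\mathbb{E}[Y_{1}]<\infty$ is needed) gives $F''(\beta)=\lambda\int_{0}^{\infty}y^{2}e^{-\beta y}p(y)dy>0$ and $G''(\beta)=\delta\int_{0}^{\infty}y^{2}e^{-\beta y}p(y)dy>0$, so both are strictly convex. Since $F(0)=-\frac{1}{2}\mu^{2}/\sigma^{2}<0$ and $F(\beta)\to+\infty$ as $\beta\to\infty$ (using $\int_{0}^{\infty}[e^{-\beta y}-1]p(y)dy\to-1$), there is a unique $\beta_{1}>0$ with $F<0$ on $[0,\beta_{1})$ and $F>0$ on $(\beta_{1},\infty)$. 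For $G$ one has $G(0)=0$, $G'(0)=1-\delta\mathbb{E}[Y_{1}]$ and $G(\beta)\to+\infty$; if $\delta\mathbb{E}[Y_{1}]>1$ then $G'(0)<0$ and strict convexity produces a unique $\beta_{2}>0$ with $G<0$ on $(0,\beta_{2})$ and $G>0$ on $(\beta_{2},\infty)$, whereas if $\delta\mathbb{E}[Y_{1}]\le1$ then $G>0$ on $(0,\infty)$ and we set $\beta_{2}:=0$. In every case $G(\beta_{2})=0$, $G>0$ on $(\beta_{2},\infty)$, and $G<0$ on $(0,\beta_{2})$ (the last range empty when $\beta_{2}=0$).

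\emph{Step 2: substitution.} Put $\beta^{\ast}:=\beta_{1}\vee\beta_{2}>0$ and $w(x)=e^{-\beta^{\ast}x}$, so $w\in{\rm C}^{2}_{\rm b}$, $w'<0$, $w''>0$, $\int_{0}^{\infty}[w(x+y)-w(x)]p(y)dy=e^{-\beta^{\ast}x}\int_{0}^{\infty}[e^{-\beta^{\ast}y}-1]p(y)dy$, and $(w')^{2}/w''=e^{-\beta^{\ast}x}$. A direct computation then factors the two expressions inside the minimum in \eqref{eqn:(5.12)} as $\Phi(x):=-\rho w'(x)+\lambda\int_{0}^{\infty}[w(x+y)-w(x)]p(y)dy-\frac{\mu^{2}(w')^{2}}{2\sigma^{2}w''}=e^{-\beta^{\ast}x}F(\beta^{\ast})$ and $\Psi(x):=\delta\int_{0}^{\infty}[w(x+y)-w(x)]p(y)dy-w'(x)=e^{-\beta^{\ast}x}G(\beta^{\ast})$. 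Hence $\mathcal{P}=\{x:\Psi(x)>0\}$ equals $\mathbb{R}_{+}$ when $G(\beta^{\ast})>0$ and is empty otherwise, and (iii) holds because $w(0)=1$.

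\emph{Step 3: the two cases.} It remains to verify (i)--(ii), i.e.\ that $F(\beta^{\ast})\ge0$, $G(\beta^{\ast})\ge0$, and that $F(\beta^{\ast})=0$ whenever $G(\beta^{\ast})>0$. If $\beta_{1}\ge\beta_{2}$ then $\beta^{\ast}=\beta_{1}$, so $F(\beta^{\ast})=0$ and, by Step 1, $G(\beta^{\ast})=G(\beta_{1})\ge0$; both inequalities in (ii) hold, and if moreover $G(\beta^{\ast})>0$ then $\mathcal{P}=\mathbb{R}_{+}$ and $\Phi\equiv e^{-\beta^{\ast}x}F(\beta_{1})\equiv0$, giving (i). If $\beta_{2}>\beta_{1}$ then $\beta^{\ast}=\beta_{2}$, so $G(\beta^{\ast})=0$ and $F(\beta^{\ast})=F(\beta_{2})>0$; both inequalities in (ii) hold and $\mathcal{P}=\varnothing$, so (i) is vacuous. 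Either way $w$ is a classical solution. The main obstacle is not a genuine difficulty but the bookkeeping in the $\beta^{\ast}=\beta_{1}$ branch: one must invoke convexity of $G$ together with $G(0)=0$ to get $G(\beta_{1})\ge0$, and one must make sure the degenerate case $\delta\mathbb{E}[Y_{1}]\le1$ (no positive root of $G$) is absorbed into the same argument, which it is once $\beta_{2}$ is defined to be $0$ there so that $G(\beta_{2})=0$ and $G>0$ on $(\beta_{2},\infty)$ uniformly.
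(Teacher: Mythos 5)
Your proof is correct and follows essentially the same route as the paper: plug in $w(x)=e^{-(\beta_1\vee\beta_2)x}$ so that the two expressions in \eqref{eqn:(5.12)} factor as $e^{-\beta^\ast x}F(\beta^\ast)$ and $e^{-\beta^\ast x}G(\beta^\ast)$, then verify the sign conditions by comparing $\beta_1$ with $\beta_2$ (your convexity argument for the sign structure of $F$ and $G$, and your merging of the paper's cases $G'(0)\ge 0$ and $\beta_1>\beta_2$ into the single case $\beta_1\ge\beta_2$, are only cosmetic differences). No gaps.
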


\begin{proof}[Proof of Lemma~\ref{lem:classical_solution_investment}]
If $G'(0)=1-\delta\mathbb{E}[Y_1]\ge0$, then $\beta_2=0$ and $G(\beta_1)>0$. This implies that $\mathcal{P}=\mathbb{R}_+$. 
By straightforward calculations, 
\[
\begin{split}
-\rho w'(x)+\lambda\int_{0}^{\infty}[w(x+y)-w(x)]p(y)dy-\frac{\mu^2(w')^2}{2\sigma^2w''}=wF(\beta_1)&=0,\\
\delta\int_{0}^{\infty}[w(x+y)-w(x)]p(y)dy-w'(x)=wG(\beta_1)&>0.
\end{split}
\]
If $G'(0)=1-\delta\mathbb{E}[Y_1]<0$ and $\beta_1> \beta_2$, then $G(\beta_1)>0$ and we have $\mathcal{P}=\mathbb{R}_+$. Similar to the previous paragraph we obtain that $w$ is a classical solution.
If $G'(0)=1-\delta\mathbb{E}[Y_1]<0$ and $\beta_1\le \beta_2$, then $F(\beta_2)\ge0$ and we have $\mathcal{P}=\emptyset$. Thus,
\[
\begin{split}
-\rho w'(x)+\lambda\int_{0}^{\infty}[w(x+y)-w(x)]p(y)dy-\frac{\mu^2(w')^2}{2\sigma^2w''}=wF(\beta_2)&\ge0,\\
\delta\int_{0}^{\infty}[w(x+y)-w(x)]p(y)dy-w'(x)=wG(\beta_2)&=0.
\end{split}
\]
The proof is complete.
\end{proof}

\subsubsection{A Verification Theorem}

\begin{theorem}[Verification]\label{thm:verification_investment_gamma=1}
Let $w\in{\rm C}^2_{\rm b}$ be a decreasing classical solution of problem \eqref{eqn:(5.12)} such that condition \eqref{eqn:w(X)1{tau<infty}=0} holds.
Then, $w(x)\le V(x)$, where $V(x)$ is the value function of the ruin probability minimization problem with investment. \\
In addition, if $\mathcal{P}=\mathbb R_+$,  then $w(x)=V(x)$.
\end{theorem}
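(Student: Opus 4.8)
The plan is to argue exactly along the lines of Theorems~\ref{VeriThm} and \ref{eqn:investment_verification_gamma<1}, the only new feature being the diffusive term coming from the position in the market index. Fix an admissible pair $(C,A)\in\mathcal{C}\times\mathcal{A}$ and write $X_t:=X_t^{C,A}$. Since $w\in{\rm C}^2_{\rm b}$, It\^o's formula for jump--diffusions on $[0,t\wedge\tau]$ expresses $w(X_{t\wedge\tau})$ as $w(x)$ plus the time integral of the controlled generator
\[
\mathcal{L}^{C,A}w(x):=\bigl(A\mu-\rho-C\bigr)w'(x)+(\lambda+\delta C)\int_{0}^{\infty}[w(x+y)-w(x)]p(y)\,dy+\tfrac12A^{2}\sigma^{2}w''(x),
\]
plus a local martingale made of the Brownian integral $\int A_s\sigma w'(X_s)\,dW_s$ and the compensated jump integral. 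The first step I would carry out is to show $\mathcal{L}^{C,A}w(x)\ge0$ for every $C\ge0$, $A\in\mathbb{R}$ and $x>0$. Write $\mathcal{L}^{C,A}w(x)$ as the sum of the $A$-quadratic $-\rho w'(x)+\lambda\int_{0}^{\infty}[w(x+y)-w(x)]p(y)dy+A\mu w'(x)+\tfrac12A^{2}\sigma^{2}w''(x)$ and of $C$ times $-w'(x)+\delta\int_{0}^{\infty}[w(x+y)-w(x)]p(y)dy$. Any classical solution of \eqref{eqn:(5.12)} is necessarily strictly convex (otherwise the term $\mu^2(w')^2/(2\sigma^2w'')$ is not even meaningful), so the minimum of the quadratic over $A$ is $-\mu^2(w'(x))^2/(2\sigma^2w''(x))$, attained at $A^\ast(x)=-\mu w'(x)/(\sigma^2w''(x))$; hence the first summand is bounded below by the left-hand side of the first inequality in \eqref{eqn:variational_inequality}, which is $\ge0$. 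The second summand is $C$ times the left-hand side of the second inequality in \eqref{eqn:variational_inequality}, again $\ge0$, and since $C\ge0$ this gives $\mathcal{L}^{C,A}w(x)\ge0$.

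Next I would handle the martingale term. Because $w$ and $w'$ are bounded and $C$ is bounded, the compensated jump integral is a true martingale, but the Brownian integral is only a local martingale: $A\in\mathcal{A}$ guarantees $\int_0^tA_s^2ds<\infty$ only almost surely. I would therefore localize with $T_n:=\inf\{t\ge0:\int_0^tA_s^2ds\ge n\}$, take expectations on $[0,t\wedge\tau\wedge T_n]$ to obtain $w(x)\le\mathbb{E}[w(X_{t\wedge\tau\wedge T_n})]$, and then let $n\to\infty$ using $T_n\to\infty$ a.s. together with the boundedness of $w$ (dominated convergence), reaching $w(x)\le\mathbb{E}[w(X_{t\wedge\tau})]$. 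Since the continuous part of $X$ is a.s.\ continuous and the only jumps are upward, ruin occurs at level $0$, i.e.\ $X_\tau=0$ on $\{\tau\le t\}$, so $w(X_\tau)=w(0)=1$ there and $w(x)\le\mathbb{E}[w(X_t)1_{\{t<\tau\}}]+\mathbb{P}(\tau\le t)$. Letting $t\to\infty$, the hypothesis \eqref{eqn:w(X)1{tau<infty}=0} and dominated convergence (again using that $w$ is bounded) force the first term to $0$, while $\mathbb{P}(\tau\le t)\to\mathbb{P}(\tau<\infty)$; thus $w(x)\le\mathbb{P}(\tau<\infty\mid X_0=x)$, and taking the infimum over $(C,A)$ gives $w\le V$.

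For the equality statement I would specialize to $\mathcal{P}=\mathbb{R}_+$. Then $w$ satisfies the equation in item~(i) on all of $\mathbb{R}_+$ and the second inequality in \eqref{eqn:variational_inequality} is strict, so in $\inf_{C\ge0}$ the minimizer is $C^\ast\equiv0$, while the minimizer over $A$ is $A^\ast(x)=-\mu w'(x)/(\sigma^2w''(x))$. For the solution singled out in Lemma~\ref{lem:classical_solution_investment}, namely $w(x)=e^{-\beta_1x}$ when $\mathcal{P}=\mathbb{R}_+$, this is the constant $A^\ast=\mu/(\sigma^2\beta_1)$, so $(C^\ast,A^\ast)$ is admissible and the closed-loop equation $dX^\ast_t=-\rho\,dt+dJ_t+A^\ast\,dS_t$ has a strong solution. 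Along this trajectory $\mathcal{L}^{C^\ast,A^\ast}w\equiv0$, so the It\^o identity gives $w(x)=\mathbb{E}[w(X^\ast_{t\wedge\tau})]$ for all $t$, and repeating the limiting argument of the previous paragraph (the hypothesis \eqref{eqn:w(X)1{tau<infty}=0} is assumed for every admissible pair, hence for $(C^\ast,A^\ast)$) yields $w(x)=\mathbb{P}(\tau<\infty\mid X_0^\ast=x)\ge V(x)$. Together with $w\le V$ this gives $w=V$.

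I expect the main obstacle to lie in the second paragraph: because $\mathcal{A}$ only imposes an almost-sure rather than an $L^1$ bound on $\int_0^\cdot A_s^2\,ds$, the Brownian stochastic integral is merely a local martingale, which forces the localization argument and a careful passage to the limit; and one must also ensure the closed-loop controlled SDE under $(C^\ast,A^\ast)$ is well posed. The convexity of $w$ (used to legitimize the minimization over $A$) is automatic for a classical solution of \eqref{eqn:(5.12)} but should be stated explicitly, and the boundary identity $X_\tau=0$, used to replace $w(X_\tau)$ by $1$, should be justified from the fact that the jumps of $X$ are all positive while its continuous part is a genuine diffusion.
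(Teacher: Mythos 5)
Your argument is correct and follows the same verification skeleton as the paper: It\^o's formula, the two variational inequalities in \eqref{eqn:variational_inequality}, the passage $t\to\infty$ via condition \eqref{eqn:w(X)1{tau<infty}=0}, and, when $\mathcal{P}=\mathbb{R}_+$, equality along $C\equiv0$ with the feedback $A^{*}=-\mu w'/(\sigma^{2}w'')$. The one genuine difference is the class of R\&D strategies used for the inequality $w\le V$: you keep $C$ an ordinary bounded rate process in $\mathcal{C}$ and split the generator into the $A$-quadratic part plus $C$ times the second variational expression, whereas the paper treats $C$ as a non-decreasing, possibly discontinuous cumulative (singular) control, applies It\^o's formula with a $dC^{0}_{s}$ term and control-jump terms, and uses the hypothesis that $w$ is decreasing precisely to make the lump-sum terms $w(X_{s}-\Delta C_{s})-w(X_{s})$ nonnegative. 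Since $V$ is defined as an infimum over $\mathcal{C}\times\mathcal{A}$, your restricted argument does prove the stated inequality (and the monotonicity hypothesis is not needed in your version), but the paper's formulation buys the bound over the relaxed singular strategies that are the natural controls in this $\gamma=1$ limit; conversely, your localization of the Brownian integral (only $\int_{0}^{t}A_{s}^{2}ds<\infty$ a.s.\ is assumed) addresses a point the paper passes over silently, as does the convexity $w''>0$ needed to legitimize the minimization over $A$, which you rightly flag. Finally, your equality step invokes the specific exponential solution $w=e^{-\beta_{1}x}$ to secure admissibility of the constant $A^{*}$; in the paper that specialization is deferred to Corollary \ref{cor:w=V_gamma=1}, while the theorem's proof asserts the It\^o identity for the general feedback $A^{*}(X^{*})$ without verifying well-posedness of the closed-loop equation, so on this point you and the paper are at a comparable level of rigor, just organized differently.
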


\begin{proof}[Proof of Theorem~\ref{thm:verification_investment_gamma=1}]
Let $A=\{A_s\}_{s\ge0}$ be an admissible strategy and $C:=\{C_t\}_{t\ge0}$ be  a non-decreasing singular function, i.e. $C_t:=\int_0^tdc_s$ where $c_s$ is a non-negative measure. Then, 
\[
X^{C,A}_t=x-\rho t-C_t+J_t^C+\int_0^tA_sdS_s\,,
\]
where $J_t^C=\sum_{i=1}^{N_t^C}Y_i$ where $N_t^C$ is a simple point process with compensator  $\lambda t+\delta C_t$.
Then, by It\^o's formula for ${\rm C}^2_{\rm b}$ functions,  we have 
\[
\begin{split}
\mathbb{E}\left[w\left(X^{C,A}_{t}\right)\Big|\mathcal{F}_{s}\right]&=w\left(X^{C,A}_{s}\right)+\mathbb{E}\Bigg[\int_s^{t}\bigg(-\rho w'+\lambda\int_{0}^{\infty}[w(x+y)-w(x)]p(y)dy\\
&+A_u\mu w'+\frac12A_u^2\sigma^2 w''\bigg)\left(X^{C,A}_{u}\right)du\\
&+\int_s^{t}\bigg(-w'+\delta\int_{0}^{\infty}[w(x+y)-w(x)]p(y)dy\bigg)\left(X^{C,A}_{u}\right)dC^0_u\\
&+\sum_{s\le u\leq t}\left(w(X^{C,A}_{u}-\Delta C_u)-w\left(X^{C,A}_{u}\right)\right)\Bigg].
\end{split}
\]
%\[
%\begin{split}
%\mathbb{E}\left[w\left(X^{C,A}_{t\wedge \tau}\right)\right]&=w(x)+\mathbb{E}\Bigg[\int_0^{t\wedge \tau}\bigg(-\rho w'+\lambda\int_{0}^{\infty}[w(x+y)-w(x)]p(y)dy\\
%&+A_s\mu w'+\frac12A_s^2\sigma^2 w''\bigg)\left(X^{C,A}_{s}\right)ds\\
%&+\int_0^{t\wedge \tau}\bigg(-w'+\delta\int_{0}^{\infty}[w(x+y)-w(x)]p(y)dy\bigg)\left(X^{C,A}_{s}\right)dC^0_s\\
%&+\sum_{s\le t\wedge \tau}\left(w(X^{C,A}_{s}-\Delta C_s)-w\left(X^{C,A}_{s}\right)\right)\Bigg].
%\end{split}
%\]
Here $C_u=C^0_u+\Delta C_u$ where $C^0_u$ is the continuous part of $C$ and $\Delta C_u$ is the pure jump part of $C_u$. Notice that by the definition of classical solution, \eqref{eqn:variational_inequality} holds and therefore, the first two terms inside the expectation above are non-negative. In addition since $w$ is non-increasing, we have $w(X^{C,A}_{u}-\Delta C_u)-w(X^{C,A}_{u})\ge0$. Thus, $\mathbb{E}[w(X^{C,A}_{t})|\mathcal{F}_{s}]\ge w(X^{C,A}_{s})$ and $w(X^{C,A}_{t})$ is a submartingale.
Similar to the arguments in the proof of Theorem \ref{eqn:investment_verification_gamma<1}, \eqref{eqn:w(X)1{tau<infty}=0} implies that $w(x)\le \mathbb{P}(\tau<\infty)$. By taking the infimum over $(C,A)$, we obtain $w\le V$.\\
Now assume that $\mathcal{P}=\mathbb{R}_+$ and set $C\equiv0$. It follows from the definition of $A^*$ and It\^o's formula that
 \[
\begin{split}
\mathbb{E}\left[w\left(X^*_{t\wedge \tau}\right)\right]&=w(x)+\mathbb{E}\Bigg[\int_0^{t\wedge \tau}\bigg(-\rho w'+\lambda\int_{0}^{\infty}[w(x+y)-w(x)]p(y)dy\\
&+A^*\mu w'+\frac12(A^*)^2\sigma^2 w''\bigg)(X^*_{s})ds\Bigg]=w(x),
\end{split}
\]
In the above, $X^*$ satisfies $X^*_t=x-\rho t+J_t^\lambda+\int_0^tA^*(X^*_s)dW_s$.
If we let $t\to\infty$, we obtain $w(x)=\mathbb{P}(\tau^*<\infty)\ge V(x)$ where $\tau^*$ is the ruin time for process $X^*$.
The proof is complete
\end{proof}

\begin{corollary}\label{cor:w=V_gamma=1}
The classical solution $w(x)=e^{-(\beta_1\vee\beta_2)x}$ of boundary value problem \eqref{eqn:(5.2)} satisfies the assumption of the verification and thus $w=V$.
\end{corollary}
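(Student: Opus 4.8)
The plan is to reduce the statement to the two parts of Theorem~\ref{thm:verification_investment_gamma=1} and then to cover by hand the single regime that theorem does not reach. First observe that $F(0)=-\tfrac12\mu^{2}/\sigma^{2}<0$, that $F$ is convex (indeed $F''(\beta)=\lambda\int_{0}^{\infty}y^{2}e^{-\beta y}p(y)dy>0$), and that $F(\beta)/\beta\to\rho$ as $\beta\to\infty$; hence $\beta_{1}>0$, so $\beta_{1}\vee\beta_{2}\ge\beta_{1}>0$ and $w(x)=e^{-(\beta_{1}\vee\beta_{2})x}$ is bounded, of class ${\rm C}^{2}_{\rm b}$, and strictly decreasing, with $w(0)=1$ and $w(\infty)=0$. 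By Lemma~\ref{lem:classical_solution_investment}, $w$ is a classical solution of \eqref{eqn:(5.12)}. Thus the only remaining hypothesis of Theorem~\ref{thm:verification_investment_gamma=1} to be checked is the transversality condition \eqref{eqn:w(X)1{tau<infty}=0}.

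To verify \eqref{eqn:w(X)1{tau<infty}=0}, I would argue exactly as in the corollary to Theorem~\ref{VeriThm}. Since $w$ is decreasing with $w(\infty)=0$, on $\{t\le\tau\}$ one has $w(X_{t}^{C,A})\in(0,1]$, and the limit in \eqref{eqn:w(X)1{tau<infty}=0} can fail only on the event $\{\tau=\infty,\ \liminf_{t\to\infty}X_{t}^{C,A}<\infty\}=\bigcup_{L\in\mathbb{N}}\{\tau=\infty,\ \liminf_{t\to\infty}X_{t}^{C,A}<L\}$. Because $C$ is bounded, whenever $X^{C,A}$ visits $[0,L]$ there is, by the strong Markov property, a probability bounded below (uniformly over such visits) that no jump occurs on a time window long enough for the strictly negative drift, together with a suitable Brownian event, to send the path below $0$; a Borel--Cantelli argument then gives $\mathbb{P}(\tau=\infty,\ \liminf_{t\to\infty}X_{t}^{C,A}<L)=0$ for every $L$. (The only point requiring slightly more care than in the purely jump setting of Theorem~\ref{VeriThm} is that the position $A\in\mathcal{A}$ in the risky asset is a priori unbounded; this is handled by localizing on the visits to $[0,L]$.) With \eqref{eqn:w(X)1{tau<infty}=0} in hand, the first part of Theorem~\ref{thm:verification_investment_gamma=1} yields $w\le V$.

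For the reverse inequality I would distinguish the cases of Lemma~\ref{lem:classical_solution_investment}. If $\mathcal{P}=\mathbb{R}_{+}$ --- which occurs when $1-\delta\mathbb{E}[Y_{1}]\ge0$, or when $1-\delta\mathbb{E}[Y_{1}]<0$ and $\beta_{1}>\beta_{2}$, so that $\beta_{1}\vee\beta_{2}=\beta_{1}$ --- the second part of Theorem~\ref{thm:verification_investment_gamma=1} applies directly with the admissible (constant) controls $C\equiv0$ and $A^{\ast}(x)=\mu/(\sigma^{2}\beta_{1})$, giving $w=V$. The remaining case is $1-\delta\mathbb{E}[Y_{1}]<0$ and $\beta_{1}\le\beta_{2}$, where $\mathcal{P}=\emptyset$ and $w(x)=e^{-\beta_{2}x}$; here the formal optimizer $C\equiv\infty$ is inadmissible and Theorem~\ref{thm:verification_investment_gamma=1} does not by itself yield equality. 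I would instead approximate: for $M>0$ let $\beta_{M}>0$ be the unique positive root of
\[
(\rho+M)\beta+(\lambda+\delta M)\int_{0}^{\infty}[e^{-\beta y}-1]p(y)dy-\tfrac12\tfrac{\mu^{2}}{\sigma^{2}}=0
\]
(existence and uniqueness by the same convexity argument that gave $\beta_{1}$), and use the admissible pair $C_{t}\equiv M\in\mathcal{C}$, $A_{t}\equiv\mu/(\sigma^{2}\beta_{M})\in\mathcal{A}$. An It\^{o} computation shows that $e^{-\beta_{M}X^{M}_{t\wedge\tau}}$ is then a bounded martingale, so $e^{-\beta_{M}x}=\mathbb{E}[e^{-\beta_{M}X^{M}_{t\wedge\tau}}]\ge\mathbb{P}(\tau\le t)$, and letting $t\to\infty$ gives $V(x)\le e^{-\beta_{M}x}$. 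Rewriting the equation for $\beta_{M}$ as $M\,G(\beta_{M})+F(\beta_{M})=0$ and using that $G<0$ on $(0,\beta_{2})$ and $G>0$ on $(\beta_{2},\infty)$, one gets $\beta_{M}\to\beta_{2}$ as $M\to\infty$, hence $V(x)\le e^{-\beta_{2}x}=w(x)$, and therefore $w=V$ in this case as well.

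The main obstacle is precisely this last regime: the verification theorem certifies optimality only when $\mathcal{P}=\mathbb{R}_{+}$, so one must exhibit a sequence of genuinely admissible strategies whose ruin probabilities decrease to $e^{-\beta_{2}x}$ and then establish the convergence $\beta_{M}\to\beta_{2}$. The verification of \eqref{eqn:w(X)1{tau<infty}=0} is a secondary, largely routine point, the only wrinkle there being the a priori unboundedness of the admissible risky-asset positions.
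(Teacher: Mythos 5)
Your proposal is correct and follows essentially the same route as the paper: transversality is checked as in the corollary to the first verification theorem, the verification theorem settles the cases with $\mathcal{P}=\mathbb{R}_{+}$, and in the case $\beta_{1}\le\beta_{2}$ one approximates with constant admissible strategies $C\equiv M$ (plus a constant market position) whose ruin probabilities are $e^{-\beta_{M}x}$ with $\beta_{M}$ solving exactly the paper's equation for $\beta(c)$, and then lets $M\to\infty$. The only cosmetic difference is that the paper obtains $\beta(c)\to\beta_{2}$ by showing $\beta'(c)>0$ and dividing the defining equation by $c$, whereas you use the decomposition $MG(\beta_{M})+F(\beta_{M})=0$ together with the sign of $G$; both yield the same limit and hence $w=V$.
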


\begin{proof}[Proof of Corollary~\ref{cor:w=V_gamma=1}]
%By the same line of arguments as in Theorem \ref{eqn:investment_verification_gamma<1}, one can show that 
First, the condition \eqref{eqn:w(X)1{tau<infty}=0} trivially holds. Therefore,
if $\beta_1>\beta_2$, then $\mathcal{P}=\mathbb{R}_+$ and $w=V$ is followed by Theorem \ref{thm:verification_investment_gamma=1}. It remains to show the result for the case that when $\beta_1\le\beta_2$, i.e. $\mathcal{P}=\emptyset$. 
For $c>0$ let $w_c(x)=\mathbb{P}(\tau_c<\infty)$ with $X_t=x-(\rho+c) t+J_t^c+\int_0^tA^*dW_s$ with $A^*=\frac{\mu}{\sigma^2\beta_2}$. Then, immediately we obtain $w_c\ge V$. We want to show that $w_{c}(x)\to w(x)=e^{-\beta_2 x}$ as $c\to\infty$. 
Notice that $w_c$ satisfies the equation
\[
0=-(\rho+c) w_c'(x)+(\lambda+\delta c)\int_{0}^{\infty}[w_c(x+y)-w_c(x)]p(y)dy-\frac{\mu^2(w_c')^2}{2\sigma^2w_c''},
\]
with the boundary condition $w_c(0)=1$.
The unique bounded solution of the above equation is given by $w_c(x)=e^{-\beta(c)x}$ where $\beta(c)$ satisfies 
\begin{equation}\label{eqn:beta(c)}
-(\rho+c)\beta(c)+(\lambda+\delta c)\int_0^\infty[e^{-\beta(c)y}-1]p(y)dy-\frac{\mu^2}{2\sigma^2}=0.
\end{equation}
Notice that for any $c>0$, $\beta(c)$ is uniquely determined and is continuous on $c$. In addition, straightforward calculations shows that $\beta(c)$ is increasing, i.e.
\[
\beta'(c)=\frac{1}{c}\frac{\rho+\lambda\int_0^\infty[1-e^{-\beta(c)y}]p(y)dy+\frac{\mu^2}{2\sigma^2}}{\rho+c+(\lambda+\delta c)\int_0^\infty e^{-\beta(c)y}yp(y)dy}>0.
\]
Thus, $\bar\beta:=\lim_{c\to\infty}\beta(c)$ exists and $\bar{\beta}>0$ 
and after dividing \eqref{eqn:beta(c)} by $c$ and taking limit when $c\to\infty$, we obtain
\[
G(\bar\beta)=-\bar\beta+\lambda\int_0^\infty\left[e^{-\bar\beta y}-1\right]p(y)dy=0.
\]
Since $G$ has a unique positive solution, we must have $\bar\beta=\beta_2$ 
and therefore, we obtain $V(x)\le \lim_{c\to\infty}w_c(x)=e^{-\beta_2 x}$.
This completes the proof.
\end{proof}

%%%%%%%%%%%%%%%%%%%%%%%%%%%%%%%%%%%%%%%%%

\section{Numerical Studies}\label{sec:numerical}

In this section, we carry out numerical studies to illustrate and understand better
how the minimized ruin probability and the optimal investment rate depend
on the parameters in the dual risk model.

\subsection{State-Independent Ruin Probability with Optimal Investment}

In this section, we assume that the dual risk model is state-independent, 
and in particular, we assume that
$\rho(\cdot)\equiv\rho$,
$\lambda(\cdot)\equiv\lambda$,
and
$F(\cdot,c)\equiv 
\lambda+\delta c^{\gamma}$.
We also assume that $Y_{i}$ are i.i.d. exponentially distributed so that $p(y)=\nu e^{-\nu y}$ for some $\nu>0$.
We also assume that $\lambda\mathbb{E}[Y_{1}]=\frac{\lambda}{\nu}>\rho$ so that
the ruin probability is less than $1$ without any investment in research and development. 
Indeed, the ruin probability is given by $e^{-\alpha x}$, where, according to \eqref{alphaEqn}, $\alpha$ satisfies the equation:
\begin{equation}
\rho\alpha+\lambda\int_{0}^{\infty}[e^{-\alpha y}-1]\nu e^{-\nu y}dy
=\rho\alpha-\lambda\frac{\alpha}{\nu+\alpha}=0,
\end{equation}
which implies that $\alpha=\frac{\lambda}{\rho}-\nu$.

In Figure~\ref{Comparison}, we compare 
the ruin probability without any investment, the minimized ruin probability
with investment in research and development, and the minimized ruin probability when investment in 
both research and development and a market index are allowed.
For simplicity, we assume that $\gamma=\frac{1}{2}$ so that 
as in Example \ref{ExpExample}, the minimized ruin probability is $V(x)=e^{-\beta x}$, where
$\beta=\frac{\lambda+\sqrt{\lambda^{2}+\rho\delta^{2}}}{2\rho}-\nu$,
and by investing in research and development, it reduces the ruin probability.
Now, if additional investment in a risky asset, e.g. a market index is allowed, then the ruin probability
can be further reduced and the minimized ruin probability becomes $V(x)=e^{-\beta x}$, 
where by letting $p(y)=\nu e^{-\nu y}$ and $\gamma=\frac{1}{2}$
in \eqref{betaEquation}, we deduce that $\beta>0$ is the unique solution to the equation:
\begin{equation}
\beta\rho-\frac{\beta\delta^{2}}{4}\frac{1}{(\nu+\beta)^{2}}
-\frac{\lambda\beta}{\nu+\beta}-\frac{1}{2}\frac{\mu^{2}}{\sigma^{2}}=0.
\end{equation}

\begin{figure}[htb]
\begin{center}
\includegraphics[scale=1.0]{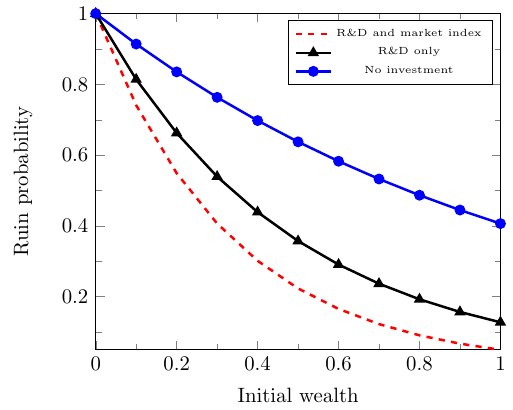}
\caption{Illustration of the ruin probability without any investment (blue curve with circle markers), the minimized ruin probability
with investment in research and development (black curve with triangle markers), and the minimized ruin probability when investment in 
both research and development and a market index are allowed (red dashed curve). The $x$-axis denotes
the initial wealth of the underlying company and the $y$-axis denotes the (minimized) ruin probability.
Here, we take $\gamma=\frac{1}{2}$, $\rho=0.1$, $\nu=0.1$, 
$\lambda=0.1$, $\delta=1$, $\mu=0.1$ and $\sigma=0.2$.}
\label{Comparison}
\end{center}
\end{figure}

%%%%%%%%%%%%%%%%%%%%%%%%%%%%%%%%%%%%%%%%%%%%%%%%%%
In Figure~\ref{GammaDelta}, we investigate the dependence of the optimal $C^{\ast}$ on
the parameters $\gamma$ and $\delta$ given $\rho=2$, $\nu=2$, and $\lambda=0.1$.
Let us recall that when investment in research and development is allowed, the optimal investment rate
$C^{\ast}$ is the unique positive solution to the following equation:
\begin{equation}\label{CstarEqn}
\lambda+(1-\gamma)\delta(C^{\ast})^{\gamma}=\rho\delta\gamma(C^{\ast})^{\gamma-1}.
\end{equation}
When additional investment in a market index is allowed, the optimal investment rate $C^{\ast}$
for the investment in research and development remains the same. 
Notice that from \eqref{CstarEqn}, the optimal $C^{\ast}$ is independent of the distribution
of $Y_{i}$. And therefore the definition of $C^{\ast}$ is independent of the condition
\eqref{ConditionOne} under which the minimized ruin probability is less than $1$. 
Intuitively, that is because, $C^{\ast}$ optimizes over the drift term by the random time change technique, 
but when the condition \eqref{ConditionOne} is violated, even the optimal $C^{\ast}$ still gives
the ruin probability equal to $1$. In Figure~\ref{GammaDelta}, we give
the heat map plot of the optimal $C^{\ast}$ as function of $\gamma$ and $\delta$. 
Note that for $p(y)=\nu e^{-\nu y}$
the condition \eqref{ConditionOne} is equivalent to
\begin{equation}\label{boundary}
\rho-\frac{\lambda}{\nu}-(\delta\gamma)^{\frac{1}{1-\gamma}}\left(\frac{1}{\gamma}-1\right)\frac{1}{\nu^{\frac{1}{1-\gamma}}}<0.
\end{equation}
When this condition is violated, 
then it corresponds to the darker region in the bottom half of the plot in Figure~\ref{GammaDelta}. 
The boundary is achieved when the left hand side of \eqref{boundary} is zero.
In this region, the ruin probability is always $1$ regardless of the investment in research and development.
When the condition \eqref{boundary} is satisfied,
it corresponds to the upper half of the plot in Figure~\ref{GammaDelta}. 
In this region, it is easy to observe that as $\delta$ increases, $C^{\ast}$ increases.
For the plot in Figure~\ref{GammaDelta}, the optimal $C^{\ast}$ is less sensitive
to the change of the parameter $\gamma$.

\begin{figure}[htb]
\begin{center}
\includegraphics[scale=1.0]{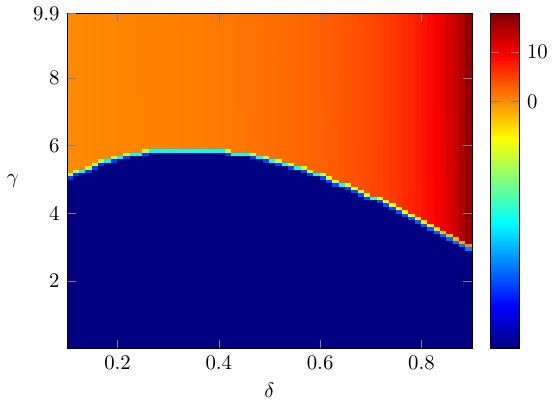}
\caption{This shows $C^{\ast}$ as a function
of $\gamma$ and $\delta$. In the darker region in the bottom half of the plot, this is 
where ruin probability is always $1$ regardless of the investment. In the upper half of the plot, the minimized
ruin probability is less than $1$ and it shows the heat map. Here, we take
$\rho=2$, $\nu=2$, and $\lambda=0.1$.}
\label{GammaDelta}
\end{center}
\end{figure}

In Figure~\ref{RhoLambda}, we investigate the dependence of the optimal $C^{\ast}$ on
the parameters $\rho$ and $\lambda$ given $\delta=1$, $\nu=0.1$ and $\gamma=\frac{1}{2}$.
For $\gamma=\frac{1}{2}$, we showed in Example~\ref{ExpExample} that
the optimal $C^{\ast}$ is given by
\begin{equation}
C^{\ast}=\frac{\delta^{2}\rho^{2}}{(\lambda+\sqrt{\lambda^{2}+\rho\delta^{2}})^{2}}.
\end{equation}
When $p(y)=\nu e^{-\nu y}$ and $\gamma=\frac{1}{2}$, the condition \eqref{ConditionOne} reduces to
$\rho-\frac{\lambda}{\nu}-\frac{\delta^{2}}{4\nu^{2}}<0$.
When this condition is violated, the ruin probability is always $1$
regardless of the investment and it corresponds to the dark region
in the right bottom corner of the plot in Figure~\ref{RhoLambda}. 
When this condition is satisfied, the heat map plot
of the optimal $C^{\ast}$ as a function of $\rho$ and $\lambda$
is illustrated in Figure~\ref{RhoLambda}. We can see
that as $\rho$ increases, the optimal $C^{\ast}$ increases, 
and as $\lambda$ increases, the optimal $C^{\ast}$ decreases.

\begin{figure}[htb]
\begin{center}
\includegraphics[scale=1.0]{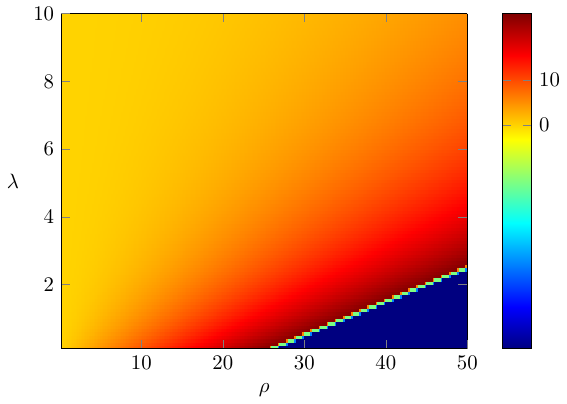}
\caption{
This shows $C^{\ast}$ as a function
of $\rho$ and $\lambda$. In the darker region in the right bottom corner of the plot, this is
where ruin probability is always $1$ regardless of the investment. In the rest of the plot, the minimized
ruin probability is less than $1$.
Here, we take $\nu=0.1$, $\gamma=0.5$ and $\delta=1$.}
\label{RhoLambda}
\end{center}
\end{figure}

%%%%%%%%%%%%%%%%%%%%%%
\subsection{State-Dependent Ruin Probability with Optimal Investment}

In this section, we assume that the dual risk model is state-dependent, 
and in particular, we assume that
$F(x,c)=\lambda(x)+\delta(x)c^{\gamma}$.
We also assume that $Y_{i}$ are i.i.d. exponentially distributed so that $p(y)=\nu e^{-\nu y}$ for some $\nu>0$.

First, let us consider a special example in the case of $0<\gamma<1$.
Let us consider the model in Example~\ref{StateExI}.
For simplicity, let us assume that $\gamma=\frac{1}{2}$.
Recall that in Example~\ref{StateExI}, 
$\rho(x)=\rho_{0}$, $\lambda(x)=\lambda_{0}(c_{1}x+c_{2})$, 
and $\delta(x)=\delta_{0}(c_{1}x+c_{2})$. The optimal investment rate $C^{\ast}(x)\equiv C_{0}$ 
is a constant and is given by:
\begin{equation}
C_{0}=\frac{\delta_{0}^{2}\rho_{0}^{2}}{(\lambda_{0}+\sqrt{\lambda_{0}^{2}+\rho_{0}\delta_{0}^{2}})^{2}}.
\end{equation}
The minimized ruin probability is given by
\begin{equation}\label{minRuinState}
\frac{2a\sqrt{d}e^{cx-dx^{2}}+\sqrt{\pi}e^{\frac{c^{2}}{4d}}(ac+2bd)
\text{erfc}(\frac{2dx-c}{2\sqrt{d}})}{2a\sqrt{d}+\sqrt{\pi}e^{\frac{c^{2}}{4d}}(ac+2bd)\text{erfc}(\frac{-c}{2\sqrt{d}})},
\end{equation}
where $x$ is the initial wealth, $a:=c_{1}$, $b:=c_{2}$,
$c:=\nu-\frac{\lambda_{0}+\delta_{0}C_{0}^{1/2}}{\rho_{0}+C_{0}}c_{2}$,
and
$d:=\frac{\lambda_{0}+\delta_{0}C_{0}^{1/2}}{\rho_{0}+C_{0}}\frac{c_{1}}{2}$.
By setting $C_{0}=0$ in \eqref{minRuinState}, we get
the ruin probability without any investment in research and development.

In Figure~\ref{StateComparison}, the blue curve with circle markers stands for the ruin probability
without investment and the red dashed curve stands for the minimized ruin probability with investment.
These two curves differ from exponential decays, which is due to the flexibility
of the state-dependent model. As observed in \cite{ZhuState}, for state-dependent dual risk model, 
the ruin probability can have subexponential, exponential and superexpontial decays in terms
of the initial wealth. Also for the state-dependent dual risk model, the ruin probability may not
be convex in the initial wealth (as we can see from the blue curve with circle markers in Figure~\ref{StateComparison}).

\begin{figure}[htb]
\begin{center}
\includegraphics[scale=1.0]{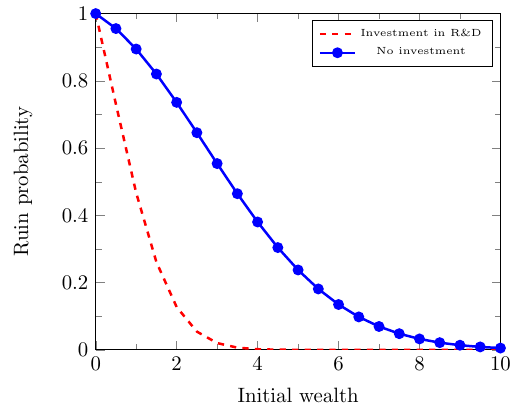}
\caption{Illustration of the ruin probability without any investment (blue curve with circle markers), the minimized ruin probability
with investment in research and development (red dashed curve). The $x$-axis denotes
the initial wealth of the underlying company and the $y$-axis denotes the (minimized) ruin probability.
Here, we take $\gamma=0.5$, $\rho_{0}=1$, $\nu=0.1$,  $\lambda_{0}=0.1$,
$\delta=1$, $c_{1}=1$, and $c_{2}=1$.}
\label{StateComparison}
\end{center}
\end{figure}

Next, let us consider an example for $\gamma=1$ for the state-dependent dual risk model.
Let us recall that in Example~\ref{StateExII}, 
$\rho(x)=\rho_{0}(c_{1}x+c_{2})$, $\lambda(x)=\left(\nu+\frac{\lambda_{0}}{1+x}\right)\rho(x)$, 
and $\delta(x)=\delta_{0}$, and under the assumption that
$\nu<\delta_{0}<\nu+\lambda_{0}$,
the optimal $C^{\ast}$ is given by $C^{\ast}=0$ if $x\leq x^{\ast}$ 
and $C^{\ast}=\infty$ if $x>x^{\ast}$, where
$x^{\ast}:=\frac{\lambda_{0}-\delta_{0}+\nu}{\delta_{0}-\nu}$.
From Example~\ref{StateExII}, with optimal investment, 
the minimized ruin probability is 
given by $V(x)$ in \eqref{greaterxast} if $x>x^{\ast}$
and the minimized ruin probability is given by $V(x)$ in \eqref{lessxast} if $x\leq x^{\ast}$,
where $x$ is the initial wealth. Without any investment, as in Theorem~\ref{MainThm2}, 
under the assumption that $\lambda_{0}>1$, we can compute
that the ruin probability is given by
\begin{align}
V(x)%&=\frac{\int_{x}^{\infty}\frac{\lambda(y)}{\rho(y)}e^{\nu y-\int_{0}^{y}\frac{\lambda(w)}{\rho(w)}dw}dy}
%{\int_{0}^{\infty}\frac{\lambda(y)}{\rho(y)}e^{\nu y-\int_{0}^{y}\frac{\lambda(w)}{\rho(w)}dw}dy}
%\\
=\frac{\int_{x}^{\infty}\left(\nu+\frac{\lambda_{0}}{1+y}\right)\frac{1}{(1+y)^{\lambda_{0}}}dy}
{\int_{0}^{\infty}\left(\nu+\frac{\lambda_{0}}{1+y}\right)\frac{1}{(1+y)^{\lambda_{0}}}dy}
=\frac{\nu(1+x)^{-\lambda_{0}+1}+(\lambda_{0}-1)(1+x)^{-\lambda_{0}}}{\lambda_{0}+\nu-1},
\end{align}
which is strictly between $0$ and $1$.
\begin{figure}[htb]
\begin{center}
\includegraphics[scale=1.0]{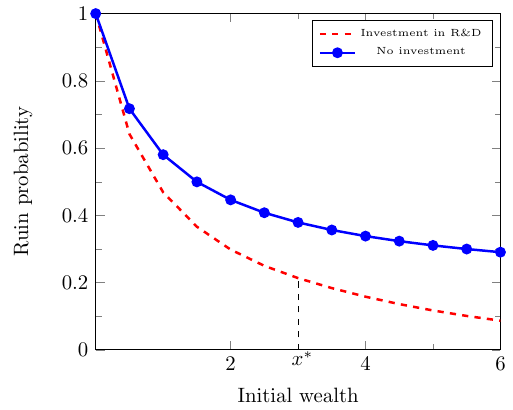}
\caption{Illustration of the ruin probability without any investment (blue curve with circle markers), the minimized ruin probability
with investment in research and development (red dashed curve). The $x$-axis denotes
the initial wealth of the underlying company and the $y$-axis denotes the (minimized) ruin probability.
$x^{\ast}$ on the $x$-axis is the critical threshold above which the optimal strategy is to invest
as much as possible in R\&D, and below which the optimal strategy is not to invest at all in R\&D.
Here, we take $\rho_{0}=1$ (irrelevant), $\nu=0.1$, $\lambda_{0}=1.2$, $\delta_{0}=0.4$, and $c_{1}=c_{2}=1$ (irrelevant) and $\gamma=1$.}
\label{StateSingularComparison}
\end{center}
\end{figure}
In Figure~\ref{StateSingularComparison}, we plot the curve of the ruin probability
as a function of the initial wealth without investment (blue curve with circle markers) and the minimized ruin probability
as a function of the initial wealth with the optimal investment in research and development (red dashed curve)
as in the example of the state-dependent dual risk model we described above. 
In Figure~\ref{StateSingularComparison}, the critical threshold for for the optimal investment
strategy is $x^{\ast}=3$ in the plot. When the wealth process
is below this threshold $x^{\ast}$, the optimal strategy for investment in R\&D is not to invest, 
and when the wealth process is above this threshold $x^{\ast}$, the optimal strategy
for investment in R\&D is to invest as aggressively as possible. 
When $x<x^{\ast}$, from \eqref{lessxast}, we can see that $V(x)$ decays polynomially
in $x$, and when $x>x^{\ast}$, from \eqref{greaterxast}, we can see that $V(x)$ decays
exponentially in $x$. 

%%%%%%%%%%%%%%%%%%%%%%%%%%%%%%%%%%%%%%%%%%%%%%
\section*{Acknowledgements}
Arash Fahim gratefully acknowledges support from the National Science Foundation via the award
NSF-DMS-1447067. Lingjiong Zhu is grateful to the support from the National Science Foundation via the awards NSF-DMS-1613164, NSF-DMS-2053454, NSF-DMS-2208303.

%%%%%%%%%%%%%%%%%%%%%%%%%%%%%%%%%%%%%%%%%%%%%%%%%%%%
\bibliographystyle{plain} 
\bibliography{DualRiskModel.bib}

\end{document}